\newcommand{\llbox}[1]{
	\vspace{-1pt}
	\begin{tcolorbox}[width=\columnwidth, colframe=black, boxrule=0.25mm, top=1mm, left=1mm, right=1mm, bottom=1mm]
		#1
	\end{tcolorbox}
}
\newcolumntype{d}[1]{D{.}{.}{#1} }
\newcolumntype{s}[1]{D{/}{/}{#1} }
\tikzset{circle split part fill/.style  args={#1,#2}{%
 alias=tmp@name, 
  postaction={%
    insert path={
     \pgfextra{%
     \pgfpointdiff{\pgfpointanchor{\pgf@node@name}{center}}%
                  {\pgfpointanchor{\pgf@node@name}{east}}%
     \pgfmathsetmacro\insiderad{\pgf@x}
      \fill[#1] (\pgf@node@name.base) ([xshift=-\pgflinewidth]\pgf@node@name.east) arc
                          (0:180:\insiderad-\pgflinewidth)--cycle;
      \fill[#2] (\pgf@node@name.base) ([xshift=\pgflinewidth]\pgf@node@name.west)  arc
                           (180:360:\insiderad-\pgflinewidth)--cycle;            
         }}}}}  
\newcommand{\OUTPUT}{\ENSURE}
\newcommand{\waitlist}{\textit{waitlist}}
\newcommand{\visited}{\textit{N}_\wr} 
\newcommand{\bad}{\Delta}
\newcommand{\assign}{\mathbin{:=}}
\newcommand{\modified}{V_\delta}
\newcommand{\edges}{E_\wr}
\newcommand{\rd}{\textit{rd}}
\newcommand{\writ}{\textit{wrt}}
\newcommand{\postProcess}[2]{\textsc{process}(#1, #2)}
\newcommand{\reducer}{red}
\newcommand{\native}{nat}
\newcommand{\diffSyn}{syn}
\newcommand{\diffOurs}{DP}
\newcommand{\diffSynRed}{\ensuremath{\Delta^\mathrm{\reducer}_\mathrm{\diffSyn}}}
\newcommand{\diffSynNat}{\ensuremath{\Delta^\mathrm{\native}_\mathrm{\diffSyn}}}
\newcommand{\diffOursRed}{\ensuremath{\Delta^\mathrm{\reducer}_\mathrm{\diffOurs}}}
\newcommand{\diffOursNat}{\ensuremath{\Delta^\mathrm{\native}_\mathrm{\diffOurs}}}
\newcommand{\diffSynUn}{\ensuremath{\Delta_\mathrm{\diffSyn}}}
\newcommand{\diffOursUn}{\ensuremath{\Delta_\mathrm{\diffOurs}}}
\newcommand\tool[1]{{{\smaller\scshape #1}\xspace}}
\newcommand\definetool  [2]{\newcommand  {#1}{\tool{#2}\xspace}}
\definetool{\benchexec}   {BenchExec}
\definetool{\diffCond} {diff\-Cond}
\definetool{\diffDetectOurs}{diffDP} 
\definetool{\cpacheckerSVCOMP}  {CPAchecker~2.1}
\definetool{\cpachecker}  {CPAchecker}
\definetool{\cpacheckerDiffSyn}  {CPAchecker\(^{\diffSynRed}\)}
\definetool{\cpacheckerDiffOurs}  {CPAchecker\(^{\diffOursRed}\)}
\definetool{\esbmc}       {ESBMC}
\definetool{\esbmcDiffSyn}       {ESBMC\(^{\diffSynRed}\)}
\definetool{\esbmcDiffOurs}       {ESBMC\(^{\diffOursRed}\)}
\definetool{\predicate}  {Predicate}
\definetool{\predicateDiffSyn}  {Predicate\(^{\diffSynRed}\)}
\definetool{\predicateDiffSynNative}  {Predicate\(^{\diffSynNat}\)}
\definetool{\predicateDiffOurs}  {Predicate\(^{\diffOursRed}\)}
\definetool{\predicateDiffOursNative}  {Predicate\(^{\diffOursNat}\)}
\definetool{\utaipan}     {UTaipan}
\definetool{\utaipanDiffSyn}     {UTaipan\(^{\diffSynRed}\)}
\definetool{\utaipanDiffOurs}     {UTaipan\(^{\diffOursRed}\)}
\newcommand{\cproof}{\textcolor{green!80!black}{\ding{51}}}
\newcommand{\calarm}{\textcolor{red!80!black}{\ding{55}}}
\newcommand{\iresult}{\faBolt$_\textnormal{\cproof}$}
\def\orcidID#1{{\href{https://orcid.org/#1}{\protect\raisebox{-1.25pt}{\protect\includegraphics{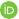}}}}}
\spnewtheorem*{theorem1}{Theorem~1}{\normalfont\bfseries}{\itshape}
\spnewtheorem*{theorem2}{Theorem~2}{\normalfont\bfseries}{\itshape}
\begin{document}
\title{Incorporating Data Dependencies and Properties in Difference Verification with Conditions (Technical Report)}

\titlerunning{Incorporating Data Dependencies and Properties in Difference Verification}

\author{
Marie-Christine Jakobs\inst{1,2}\,\orcidID{0000-0002-5890-4673}\and
Tim Pollandt\inst{1}\,\orcidID{0000-0001-9957-2444}
}
\authorrunning{M.-C.~Jakobs \and T.~Pollandt}

\institute{
Technical University of Darmstadt, Department of Computer Science
\and
LMU Munich
}

\maketitle              

\begin{abstract}
Software changes frequently.
To efficiently deal with such frequent changes, software verification tools must be incremental.
Most of today's approaches for incremental verification consider one specific verification approach. 
One exception is difference verification with conditions recently proposed by Beyer et al.
Its underlying idea is to determine an overapproximation of those modified execution paths that may cause a new property violation, which does not exist in the unchanged program, encode the determined paths into a condition, and use the condition to restrict the verification to the analysis of those determined paths.
To determine the overapproximation, Beyer et al.\ propose a syntax-based difference detector that adds any syntactical path of the modified program that does not exist in the original program into the overapproximation.
This paper provides a second difference detector~\diffDetectOurs, which computes a more precise overapproximation by taking data dependencies and program properties into account when determining the overapproximation of those modified execution paths that may cause a new property violation.
Our evaluation indeed shows that our more precise difference detector improves the effectiveness and efficiency of difference verification with condition on several tasks.

\keywords{Incremental verification \and Regression verification \and Difference verification \and Conditional model checking \and Software verification.}
\end{abstract}

\section{Introduction}
Software is omnipresent and in many cases we expect it to work reliably. 
Meanwhile, bug fixes, change requests, or the addition of new features, etc.\ cause frequent software changes and each of these changes may introduce new bugs and, thus, may lower the software's reliability.
One countermeasure for this is to use software verification~\cite{DBLP:journals/tcad/DSilvaKW08,DBLP:journals/csur/JhalaM09} to prove the absence or detect the existence of software bugs.
Performing the verification on the complete modified program after each change is typically too costly, but often also not necessary because a modification rarely has the potential to introduce new bugs in all parts of the software. 
This observation is leveraged by many incremental software verification techniques, which aim to speed up the reverification of modified software.
For example, there exist techniques that update previously computed state space descriptions~\cite{Reviser,IncA,DBLP:conf/scam/PlasSER20,ExtremeMC,ISSE,EvolCheck}, reuse intermediate results~\cite{greenModelChecking,Recal,PrecisionReuse,TraceAbstractionReuse,SummaryReuse,AnnotationReuse,P2V}, or skip the analysis of unchanged behavior~\cite{RegressionMC,DiSE,iDiSE,Memoise,LeinoW15,iCoq,RPS,DiffCond}.
However, most approaches are tailored to and coupled with one specific verification approach. 

\begin{figure}[t]
	\centering
		\begin{tikzpicture}[>=stealth]
		  
			\node (de) [fill=black!10!white, draw, rounded corners, minimum width=17.5em, minimum height=5em] {};
			\node at ($(de.south) +(-0.6,0)$) [anchor=south] {Difference Condition Extractor};
			\node (dc) at ($(de)+(-0.65,0.22)$) [anchor=east, fill=lime!80!black, text width=5em, align=center] {Difference Detector};
			\node (cg) at ($(de)+(0.65,0.22)$) [anchor=west, fill=teal!50!white, text width=5em, align=center] {Condition Generator};
			\node (mp) [above of = dc, anchor=south, node distance=0.75cm] {modified program~\(P'\)};
			\node (op) [below of = dc, anchor=north, node distance=1.25cm] {original program~\(P\)};

			\node (cv) [right of =cg, anchor=west, node distance=3cm, draw, rounded corners, fill=black!10] {Conditional Verifier};
			\node (rcv) [below of = cv, node distance=1.25cm, minimum width=5.7cm, minimum height=1.35cm, draw, rounded corners, fill=black!10] {};
			\node at (rcv.south) [anchor=south] {Reducer-based Conditional Verifier~\cite{ReducerCMC}};
			\node (r) at ($(rcv)+(-0.75,0.2)$) [anchor=east, fill=cyan] {Reducer};
			\node (v) at ($(rcv)+(0.75,0.2)$) [anchor=west, fill=orange] {Verifier};
			
			\draw[->] (op.north) -| (dc);
			\draw[->] (mp.south) -| (dc);
			\draw[->] (mp) -| (cv);
			\draw[->] (dc) to node[above, font=\scriptsize] {DG(P,P')} (cg);
			\draw[->] (cg) to node[above, font=\scriptsize] {~condition} (cv);
			\draw[->] (r) to node[above, font=\scriptsize]{residual} (v);
			\draw[->] (r) to node[below, font=\scriptsize] {program} (v);
			\draw[->,>=open triangle 45] (rcv) -- (cv);
		\end{tikzpicture}
	\caption{Construction of a difference verifier}
	\label{fig:differenceVerifier}
\end{figure}
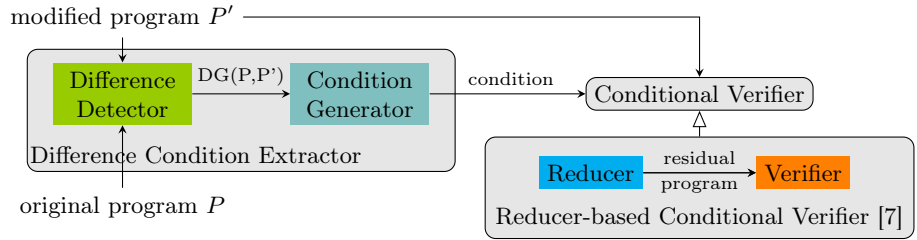

Recently, Beyer et al.\ suggested difference verification with conditions~\cite{DiffCond}, an incremental verification approach agnostic to the verifier.
Difference verification with conditions allows one to build a difference verifier (see Fig.~\ref{fig:differenceVerifier}) from an arbitrary (conditional) verifier\footnote{We can use the reducer-based approach~\cite{ReducerCMC} to conditional model checking to turn an arbitrary verifier into a condiitonal verifier.} and let the difference verifier explore that part of the state space that may contain new bugs introduced by the software modification.
The respective difference verifier first runs a difference condition extractor.
The difference condition extractor detects an overapproximation of all execution paths of the modified program that may cause a regression bug, i.e., execution paths that may violate the property and that are triggered by inputs that may not cause a property violation in the original program, and then encodes the detected paths into a condition, an automata-based exchange format used in conditional model checking~\cite{CMC}.
In a second step, the difference verifier feeds the condition and the modified program into its conditional verifier to restrict the verification to those paths of the modified program that were determined and encoded into the condition by the extractor in the previous step. 

In their proof-of-concept, Beyer et al.\ used a syntax-based difference condition extractor~\cite{DiffCond} that determines exactly those syntactic paths of the modified program that syntactically changed, i.e., all syntactic paths that do not occur in the original program.
The overapproximation computed by their syntax-based difference condition extractor is rather coarse.
For instance, consider the modification of the program in Fig.~\ref{fig:exampleprograms}.
The modification deletes the minus sign in the first statement (shown in red) to fix a bug in the computation of the negated absolute value of \(x\).
Despite the modification, due to a missing data dependency between the change and the property (i.e., the assertion), the property is not affected.
To take this into account, this paper introduces a new, more complex difference condition extractor. 
The paper proves the soundness of the new difference condition extractor and experimentally applies it in various difference verifiers, which are compared to verification without the difference extractor and difference verification with the syntax-based extractor.

This technical report is an extension of our conference paper~\cite{diffDP} and enhances our conference paper with the soundness proofs. To be self-contained, the technical report presents all contributions of the conference paper~\cite{diffDP}.

\section{Foundations of Difference Verification with Conditions} 
We present difference verification with conditions in the context of an imperative programming language that operates on assume operations and assignments.
We assume that a set~\(Ops\) contains all these operations.
Furthermore, we restrict our properties to the unreachability of certain program locations.
Note that our implementation supports C~programs and many safety properties can be encoded by the unreachability of an error location.
Formally, we represent programs including their properties by \emph{control-flow automata}.
\begin{definition}
A \emph{control-flow automaton} (CFA) \(P=(L,\ell_0,G, \ell_\mathrm{err})\) consists of
\begin{itemize}
	\item a set \(L\) of program locations with initial location \(\ell_0\in L\) and error location \(\ell_\mathrm{err}\in L\), which when reached reflects a property violation, and
	\item a set \(G\subseteq L\times Ops\times L\) of control-flow edges.
\end{itemize}
\emph{\(P\) is deterministic} if for all control-flow edges \((\ell, op_1, \ell_1), (\ell, op_2, \ell_2)\in G\) either \(op_1=op_2\) and \(\ell_1=\ell_2\), or \(op_1\) and \(op_2\) are assume operations with \(op_1\equiv\neg op_2\).
\end{definition}
\begin{figure}[t]
	\begin{subfigure}[b]{0.24\textwidth}
		\centering
		\begin{lstlisting}[basicstyle=\ttfamily,language=C,tabsize=3,emphstyle={\color{red}},escapechar=\%]
r=%\color{red}{\underline{-}}%x;
if (x>0) {
	r=-x;
	assert(r<=0);
}
	\end{lstlisting}
\end{subfigure}
\hfill
\begin{subfigure}[b]{0.3\textwidth}
	\centering
	\begin{tikzpicture}
		\footnotesize
		\node[draw, circle, inner sep=0cm,fill=yellow!20] (l0) at (0,1.5) {$\ell_0$};
		\node[draw, circle, inner sep=0cm,fill=yellow!20] (l1) at (0,0.75) {$\ell_1$};
		\node[draw, circle, inner sep=0cm,fill=yellow!20] (l2) at (1.25,0.75) {$\ell_2$};
		\node[draw, circle, inner sep=0cm,fill=yellow!20] (l3) at (1.25,0) {$\ell_3$};
		\node[draw, circle, inner sep=0cm,fill=yellow!20] (l4) at (0,0) {$\ell_4$};
		\node[draw, circle, inner sep=0cm,thick,accepting,fill=yellow!20] (l5) at (2.5,0) {$\ell_\mathrm{err}$};
		\path[->,draw] (-0,1.8) -- (l0);
		\path[->,draw,red] (l0) -- node[left,font=\footnotesize]{\texttt{r=-x;}} (l1);
		\path[->,draw] (l1) -- node[above,font=\footnotesize]{{\texttt{x>0}}} (l2);
		\path[->,draw] (l1) -- node[left,font=\footnotesize]{{\texttt{!(x>0)}}} (l4);
		\path[->,draw] (l2) -- node[right,font=\footnotesize]{\texttt{r=-x;}} (l3);
		\path[->,draw] (l3) -- node[sloped,below, yshift=-2ex, xshift=-0.5ex,font=\footnotesize]{\texttt{r<=0;}} (l4);
		\path[->,draw] (l3) -- node[sloped, below, yshift=-2ex, xshift=0.5ex,font=\footnotesize]{\texttt{!(r<=0)}} (l5);
	\end{tikzpicture}
\end{subfigure}
\hfill
\begin{subfigure}[b]{0.34\textwidth}
	\centering
	\begin{tikzpicture}
		\footnotesize
		\node[draw, circle, inner sep=0cm,fill=yellow!20] (l0) at (0,1.5) {$\ell'_0$};
		\node[draw, circle, inner sep=0cm,fill=yellow!20] (l1) at (0,0.75) {$\ell'_1$};
		\node[draw, circle, inner sep=0cm,fill=yellow!20] (l2) at (1.25,0.75) {$\ell'_2$};
		\node[draw, circle, inner sep=0cm,fill=yellow!20] (l3) at (1.25,0) {$\ell'_3$};
		\node[draw, circle, inner sep=0cm,fill=yellow!20] (l4) at (0,0) {$\ell'_4$};
		\node[draw, circle, inner sep=0cm,thick,accepting,fill=yellow!20] (l5) at (2.5,0) {$\ell'_\mathrm{err}$};
		\path[->,draw] (-0,1.8) -- (l0);
		\path[->,draw,blue] (l0) -- node[left,font=\footnotesize]{\texttt{r=x;}} (l1);
		\path[->,draw] (l1) -- node[above,font=\footnotesize]{{\texttt{x>0}}} (l2);
		\path[->,draw] (l1) -- node[left,font=\footnotesize]{{\texttt{!(x>0)}}} (l4);
		\path[->,draw] (l2) -- node[right,font=\footnotesize]{\texttt{r=-x;}} (l3);
		\path[->,draw] (l3) -- node[sloped,below, yshift=-2ex, xshift=-0.5ex,font=\footnotesize]{\texttt{r<=0;}} (l4);
		\path[->,draw] (l3) -- node[sloped, below, yshift=-2ex, xshift=0.5ex,font=\footnotesize]{\texttt{!(r<=0)}} (l5);
	\end{tikzpicture}	
\end{subfigure}
\vspace{-1em}
\caption{Program code of original program (left) as well as the CFA of original program (middle) and modified program (right). The modified program is derived from the original program by deleting the underlined parts in red.}
\label{fig:exampleprograms}
\vspace{-1em}
\end{figure}
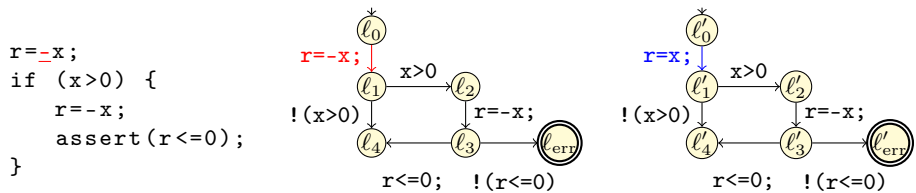
For the explanation throughout the paper, we use the original and modified program shown Figure~\ref{fig:exampleprograms}.
On the left of the figure, we show the code of our original program, which should compute the negated absolute value $-|x|$ of variable~\(x\). 
However, the red, underlined minus sign in the first statement causes a bug and is removed in the modified program, whose CFA is shown on the right of Fig.~\ref{fig:exampleprograms}.
The middle of Fig.~\ref{fig:exampleprograms} presents the CFA of the original, which only differs from the CFA of modified program in the first edge (highlighted in red and blue). 
Both CFAs use one edge per assignment and two edges for the condition in the if statement as well the assertion, namely one per condition evaluation.
In addition any violation of the assertion leads to the error location represented by double circles.

CFAs define the program syntax. 
The semantics of a CFA~\(P=(L,\ell_0,G, \ell_\mathrm{err})\) is defined by a standard operational semantics. 
Therefore, we represent a program state as as pair of a program location~\(\ell\in L\) and a concrete data state~\(c\) that assigns to each variable a concrete value.
Based on the notion of program states, we define the executable program paths.
An \emph{executable program path}~$\pi = (\ell_0,c_0) \stackrel{g_1}{\rightarrow}(\ell_1, c_1) \stackrel{g_2}{\rightarrow} \ldots \stackrel{g_n}{\rightarrow} (\ell_n, c_n)$ always begins in the initial location~\(\ell_0\) and an arbitrary concrete data state.
Thereafter, it executes \(n\) transition steps with \(n\geq0\).
Thereby, it ensures that each of the \(n\) transition 
steps~\(i\), i.e., \(1\leq i\leq n\), respects  the control-flow, i.e., 
\(g_i=(\ell_{i-1},\cdot,\ell_i)\), and that it adheres to the operation semantics, i.e., (a) for of assume operations, \(c_{i-1}\models op_i\) and \(c_{i-1}=c_i\) and (b) for assignments, \(c_i=SP_{op_i}(c_{i-1})\), where \(SP\) denotes the strongest-post operator of the semantics.
Next to semantics, we define the set~\(\writ(op)\) that contains the variables written by an operation, i.e., the variables whose value may be changed by operation~\(op\).
Similarly, we define the set~\(\rd(op)\) that contains all variables necessary to execute the operation~\(op\), i.e., if two data states \(c, c'\) agree on the values of the variables in \(\rd(op)\), then for assume operations \(c\models op\Leftrightarrow c'\models op\) and for assignments, \(SP_{op}(c)\) and \(SP_{op}(c')\) agree on the values of the variables in \(\writ(op)\).

Based on the notion of execution paths, we now charachterize those paths violate the program's property.
A path violates the program's property whenever it reaches the error location, i.e., \(\exists 0\leq i\leq n: \ell_i=\ell_\mathrm{err}\).
We write \(paths^\mathrm{err}(P)\) to denote the set of all execution paths of program~\(P\) that violate a property and use it to characterize all execution paths of a modified program~\(P'=(L',\ell'_0,G', \ell'_\mathrm{err})\) that cause a regression bug.
More concretely, these are all paths from \(paths^\mathrm{err}(P')\) whose initial concrete data state~\(c'_0\) does not trigger a property violation in the the original program \(P=(L,\ell_0,G, \ell_\mathrm{err})\).
Formally, a path $\pi' = (\ell'_0,c'_0) \stackrel{g'_1}{\rightarrow}(\ell'_1, c'_1) \stackrel{g'_2}{\rightarrow} \ldots \stackrel{g'_n}{\rightarrow} (\ell'_n, c'_n)\in paths^\mathrm{err}(P')$ causes a regression bug if there does not exist a path $\pi = (\ell_0,c_0) \stackrel{g_1}{\rightarrow}(\ell_1, c_1) \stackrel{g_2}{\rightarrow} \ldots \stackrel{g_m}{\rightarrow} (\ell_m, c_m)\in paths^\mathrm{err}(P)$ with \(c'_0=c_0\).
We use the set~\(paths^\mathrm{rb}(P, P')\) to describe all \emph{paths with regression bugs}.

In most cases, one fails to precisely compute the paths with regression bugs. 
Thus, one needs to compute an overapproximation of it and then analyze the paths in the overapproximation.
To analyze the paths in the overapproximation, difference verification with conditions applies the idea of conditional model checking~(CMC)~\cite{CMC,ReducerCMC}.
CMC steers a conditional verifier (see right of Fig.~\ref{fig:differenceVerifier}) with a condition to restrict the verification to the subset of the program paths that are not covered by the condition.
Following CMC~\cite{CMC,ReducerCMC}, we represent a condition by an automaton and use accepting states to identify covered paths.
\begin{definition}
A \emph{condition} $A=(Q,\delta,q_0,F)$ consists of
\begin{itemize}
	\item a finite set $Q$ of states including the initial state $q_0\in Q$ and a set $F\subseteq Q$ of accepting states and
	\item a transition relation $\delta\subseteq Q\times G\times Q$\footnote{In general~\cite{CMC,ReducerCMC},
      the transition relation of a condition also specifies assumptions on the program states, which we omit because they are irrelevant in difference verification.} 
    ensuring $\forall (q, op, q')\in\delta: q\!\notin\! F$.
\end{itemize}
\end{definition}
As already mentioned, CMC analyzes at least all paths that are not covered by the condition. 
Formally, a condition covers a path if the condition accepts any prefix of the path.
\begin{definition}
  A condition $A = (Q,\delta,q_0,F)$ {\em covers} an executable program path 
  $\pi = (\ell_0,c_0) \stackrel{g_1}{\rightarrow}(\ell_1, c_1) \stackrel{g_2}{\rightarrow} \ldots \stackrel{g_n}{\rightarrow} (\ell_n, c_n)$, i.e., \(\pi\in cover(A)\),
   if there exists a run $\rho = q_0 \stackrel{g_1}{\rightarrow} q_1 \stackrel{g_2}{\rightarrow} \ldots\stackrel{g_k}{\rightarrow} q_k, 0 \leq k \leq n$, in $A$, s.t.\
 $q_k \in F$.
\end{definition}
In difference verification with conditions, the difference verifier (see Fig.~\ref{fig:differenceVerifier}) first uses a difference condition extractor that generates an appropriate condition, which must not accept any path from \(paths^\mathrm{rb}(P, P')\), and then applies CMC.
We consider difference condition extractors that follow the procedure of the existing syntax-based extractor~\cite{DiffCond} and, thus, generate the condition in two steps.

In the first step, a \emph{difference detector} analyzes the modified program together with the original program to determine an overapproximation of  \(paths^\mathrm{rb}(P, P')\). 
The resulting overapproximation is described by a \emph{difference graph}~\(DG(P,P')\).
Paths in the difference graph refer to program paths of the modified program.
In particular, paths in the difference graph that reach nodes from a set~\(\Delta\) of regression bug indicator nodes characterize program paths that should be analyzed because they may cause regression bugs (like paths in \(paths^\mathrm{rb}(P, P')\)). 
To ensure this relation and to avoid that paths in the difference graph are  misclassified after reaching a node from \(\Delta\), a difference graph~\(DG(P,P')\) 
guarantees that all of its paths~\(p\) that relate to (a prefix of) a path with a regression bug must be extendable to a path \(p'\) in the difference graph that ends in a node from~\(\Delta\). 
Formally, we model difference graphs~\(DG(P,P')\) as follows.
\begin{definition}
A \(DG(P,P')=(N,E,n_0,\Delta)\) for original CFA~\(P=(L,\ell_0,G, \ell_\mathrm{err})\) and modified CFA~\(P'=(L',\ell'_0,G', \ell'_\mathrm{err})\) consists of 
\begin{itemize}
	\item
	a set~\(N\) of nodes including the initial node~\(n_0\in N\) as well as the set \(\Delta\subseteq N\) of regression bug indicator nodes, 
	\item 
	and edges~\(E\subseteq N\times G'\times N\). It must ensure the soundness property
 \end{itemize}
\begin{center}
 \(\begin{array}{c}\forall \pi' = (\ell'_0,c'_0) \stackrel{g'_1}{\rightarrow}(\ell'_1, c'_1) \stackrel{g'_2}{\rightarrow} \ldots \stackrel{g'_n}{\rightarrow} (\ell'_n, c'_n)\in paths^\mathrm{rb}(P,P'), 0{\leq}k{\leq}n, \\ n_0\stackrel{g'_1}{\rightarrow}n_1 \stackrel{g'_2}{\rightarrow} \ldots \stackrel{g'_k}{\rightarrow} n_k\in DG(P,P'): \exists n_k\stackrel{\cdot}{\rightarrow} \ldots \stackrel{\cdot}{\rightarrow} n_{k+m}: n_{k+m}\in\Delta\enspace.\end{array}\)
\end{center}
\end{definition}

\begin{algorithm}[t]
\caption{Condition generator extracting condition from difference graph}\label{alg:cond}
\begin{algorithmic}[1]
\REQUIRE  difference graph $\texttt{DG(P,P')} = (N, E, n_0, \Delta)$ 
\ENSURE extracted condition
\STATE $Q=\{n_0\}\cup \Delta$;\hspace{1em}$\mathtt{waitlist}=\{n_0\}\cup\Delta$; 
\WHILE{($\texttt{waitlist} \neq\emptyset$)}
  \STATE pop $n_s$ from \texttt{waitlist}
  \FOR{each $(n_p,(\ell,op,\ell'), n_s)\in E$ with $n_p\notin Q$}
		  \STATE $Q = Q \cup \{n_p\}$; ~ $\texttt{waitlist} = \texttt{waitlist} \cup \{n_p\}$;
	\ENDFOR
\ENDWHILE
\STATE F =  $\{n_s \mid \exists (n_p,g,n_s)\in E \wedge n_p\in Q\wedge n_s\notin Q\}$;
\STATE \textbf{return} ($Q\cup F, E \cap (Q\times G'\times (Q\cup F)), n_0, F$);
\end{algorithmic}
\end{algorithm}

In the second step of the difference condition extractor, the \emph{condition generator} converts the difference graph resulting from the first step into a condition.
Algorithm~\ref{alg:cond} performs that conversion.
To this end, it generates a condition that accepts a path as soon as a prefix has been seen for which it knows that any extension of a path in the difference graph that is related to the prefix cannot reach a node from \(\Delta\), i.e., due to the assumption on a difference graph the prefix cannot be extended to a path from \(paths^\mathrm{rb}(P, P')\).
Due to the requirements on the difference graph, the relevant paths in the difference graph start in \(n_0\), end in a node that cannot reach a node from \(\Delta\), and all other nodes on the paths may reach a node from \(\Delta\).  
Using these insights, Alg.~\ref{alg:cond}  performs a backward search from \(\Delta\) to the initial node~\(n_0\) (lines~1--5) to detect the non-accepting condition states (intermediate nodes of the paths, which may reach a  node from \(\Delta\)).
Then, it adds the accepting states (line~6), the end nodes of these paths, and copies all edges of the difference graph induced by the accepting and non-accepting states.

Next we show that difference condition extractors based on the above process and requirements, generate appropriate conditions and, thus, ensure that difference verifiers using them analyze all paths with regression bugs.
Therefore, we need to show that the generated condition~\(A\) does not accept paths causing regression bugs, i.e., \(cover(A)\cap paths^\mathrm{rb}(P, P')=\emptyset\).
\begin{theorem}\label{theo:extract}
Let \(P=(L,\ell_0,G, \ell_\mathrm{err})\) and \(P'=(L',\ell'_0,G', \ell'_\mathrm{err})\) be two CFAs. 
Difference condition extractors that sequentially compose a difference detector with Alg.~\ref{alg:cond} compute conditions~\(A\) that do not cover paths that cause regression bugs, i.e., \(cover(A)\cap paths^\mathrm{rb}(P, P')=\emptyset\).
\end{theorem}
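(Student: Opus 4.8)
The plan is to argue by contradiction. Suppose some $\pi' \in paths^\mathrm{rb}(P,P')$ is covered by the generated condition $A=(Q\cup F,\delta,n_0,F)$, where $\delta = E\cap(Q\times G'\times(Q\cup F))$ and $Q,F$ are the sets computed by Algorithm~\ref{alg:cond}. By the definition of $cover$, there is a run $\rho = n_0 \stackrel{g'_1}{\rightarrow} q_1 \stackrel{g'_2}{\rightarrow} \ldots \stackrel{g'_k}{\rightarrow} q_k$ of $A$ along a prefix of $\pi'$ with $q_k\in F$ and $0\leq k\leq n$. Since every transition of $A$ stems from $\delta\subseteq E$ and the initial state is $n_0$, this run is simultaneously a path $n_0 \stackrel{g'_1}{\rightarrow} q_1 \stackrel{g'_2}{\rightarrow} \ldots \stackrel{g'_k}{\rightarrow} q_k$ in $DG(P,P')$ that follows the first $k$ edges of $\pi'$. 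The goal is to show that $q_k$ must then simultaneously lie in the backward-reachable set of $\Delta$ and be excluded from it.

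First I would establish the invariant characterizing the set $Q$ produced by the backward search in lines~1--6: a difference-graph node $n$ belongs to $Q$ whenever there is a forward path in $DG(P,P')$ from $n$ to some node in $\{n_0\}\cup\Delta$; in particular, every node that can reach a node of $\Delta$ lies in $Q$. This follows by a straightforward induction on the length of the witnessing forward path, using that the loop pops each node it inserts and, upon popping $n_s$, inserts every predecessor $n_p$ with $(n_p,\cdot,n_s)\in E$; finiteness of $N$ guarantees both termination and exhaustiveness. Only the containment ``every $\Delta$-reaching node is in $Q$'' is needed, so this step can stay lightweight.

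Second I would read off directly from line~7 that $F\cap Q=\emptyset$, since $F$ collects only nodes $n_s$ with $n_s\notin Q$. Hence the accepting states of $A$ are exactly nodes that provably cannot reach $\Delta$. Finally I would invoke the soundness property of the difference graph: because $\pi'\in paths^\mathrm{rb}(P,P')$ and $n_0 \stackrel{g'_1}{\rightarrow} \ldots \stackrel{g'_k}{\rightarrow} q_k$ is a path in $DG(P,P')$ following the first $k$ edges of $\pi'$, there exists an extension $q_k \stackrel{\cdot}{\rightarrow} \ldots \stackrel{\cdot}{\rightarrow} n_{k+m}$ with $n_{k+m}\in\Delta$. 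Thus $q_k$ reaches $\Delta$, so $q_k\in Q$ by the first step, contradicting $q_k\in F$ together with $F\cap Q=\emptyset$. Consequently no regression-bug path is covered, which yields $cover(A)\cap paths^\mathrm{rb}(P,P')=\emptyset$.

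I expect the main obstacle to be the bookkeeping that links the accepting run of $A$ back to a genuine path of $DG(P,P')$ and justifies applying the soundness property to exactly the prefix length $k$. One must verify that the run's initial state is $n_0$, that each automaton transition is an edge of $E$ (so that restricting $\delta$ to $Q\times G'\times(Q\cup F)$ does not break the correspondence), and that the soundness property, being quantified over all prefix lengths $k$ and all matching $DG$-paths, applies verbatim to the path extracted from $\rho$. The reachability characterization of $Q$ is routine once stated carefully; the conceptual crux is recognizing that ``accepting'' was defined precisely as ``provably unable to reach $\Delta$'', which is exactly what makes the contradiction immediate.
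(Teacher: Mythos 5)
Your proposal is correct and follows essentially the same route as the paper's proof: identify the accepting run with a path in the difference graph, observe that the backward search in lines~1--5 guarantees accepting states cannot reach \(\Delta\), and then apply the soundness property of the difference graph. The only difference is presentational — you argue by contradiction while the paper argues directly via the contrapositive, and you spell out the backward-search invariant that the paper compresses into a parenthetical remark.
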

\begin{proof}
Let \(P=(L,\ell_0,G, \ell_\mathrm{err})\) and \(P'=(L',\ell'_0,G', \ell'_\mathrm{err})\) be arbitrary CFAs,  \(DG(P,P')\) a difference graph computed by an arbitrary difference detector, and \(A=(Q, \delta, q_0, F)\) the condition computed by Alg.~\ref{alg:cond} for input \(DG(P,P')\).
Consider arbitrary path \(\pi' = (\ell'_0,c'_0) \stackrel{g'_1}{\rightarrow}(\ell'_1, c'_1) \stackrel{g'_2}{\rightarrow} \ldots \stackrel{g'_n}{\rightarrow} (\ell'_n, c'_n)\in cover(A)\).
By definition of coverage, there exists
 $\rho = q_0 \stackrel{g'_1}{\rightarrow} q_1 \stackrel{g'_2}{\rightarrow} \ldots\stackrel{g'_k}{\rightarrow} q_k, 0 \leq k \leq n$, in $A$, such that
 $q_k \in F$.
Due to the construction of \(A\) in Alg.~\ref{alg:cond}, there exists a path \(n_0 \stackrel{g'_1}{\rightarrow} n_1 \stackrel{g'_2}{\rightarrow} \ldots\stackrel{g'_k}{\rightarrow} n_k\in DG(P,P')\) such that  \(\neg\exists n_k\stackrel{\cdot}{\rightarrow} \ldots \stackrel{\cdot}{\rightarrow} n_{k+m}: n_{k+m}\in\Delta\) (otherwise \(n_k\) would have been added to \(Q\) in lines 1--5 and, thus, not to \(F\) in line~6).
Due to the soundness property of the difference graph, \(\pi'\notin paths^\mathrm{rb}(P, P')\).\qed
\end{proof}
Beyer et al.~\cite{DiffCond} use the two step process from above to define a syntax-based difference condition extractor~\diffCond. 
Their difference detector explores original and modified program in synch and stops exploration in a node from \(\Delta\) as soon as a syntactical difference between original and modified program occurs on the path. 
In the following section, we describe a second difference detector, which goes beyond syntactical differences and takes data dependencies and error locations into account.

\section{Difference Detector \diffDetectOurs} 
The crucial point to make difference verification with conditions efficient is to let the verifier explore a (small) subset of the program paths.
Since for soundness the verifier needs to explore at least all paths with regression bugs~\(paths^\mathrm{rb}(P, P')\), it is crucial that the difference detector, which is responsible for determining the subset of programs to be explored, computes a (small) overapproximation of the paths with regression bugs.
In this section, we present a new difference detector~\diffDetectOurs that 
takes data dependencies and program properties to compute a more precise overapproximation of \(paths^\mathrm{rb}(P, P')\) than the existing, syntax-based difference detector~\cite{DiffCond}. 

While the set of all execution paths of the modified program is a proper overapproximation of \(paths^\mathrm{rb}(P, P')\), difference detectors aim to be more precise and remove program paths that 
do not violate a property or for which they can determine that the violation already exists in the original program, i.e., it is not a path with a regression bug. 
To determine that a path~\(\pi'\) violating the property is not a path with a regression bug, one needs to ensure the existence of what we call a witness path, namely a path of the original program that starts with the same input as \(\pi'\) and also violates a property.
The existing syntax-based difference detector~\cite{DiffCond} only considers witness paths that are identical to \(\pi'\) except for location naming.
The difference detector~\diffDetectOurs presented in this section takes additional witness paths into account.
In general, it considers witness paths that have a similar branching structure as \(\pi'\), i.e., execute a similar sequence of assume operations, and violate properties no later than \(\pi'\).\footnote{All identical witness paths also fulfill the criterion of a similar branching structure.}

Since the number of execution paths is at least large and may be even infinite, we cannot detect witnesses per execution path.
Instead \diffDetectOurs groups execution paths following the same syntactic path and considers all paths of a group at once.
More concretely, for each syntactical path~\(p'\) of the modified program \diffDetectOurs tries to detect a syntactic (witness) path~\(p\) in the original program that (1)~ has a similar branching structure, that (2)~is executable whenever \(p'\) is executable (thus, ensuring executability of path~\(p\) for all inputs of interest), and (3)~in which an error location does not occur later than in \(p\). 
For example when considering the syntactic path \(p':=\ell'_0\xrightarrow{r=x;}\ell'_1\xrightarrow{x>0}\ell'_2\xrightarrow{r=-x;}\ell'_3\xrightarrow{!(r<=0)}\ell'_\mathrm{err}\) of our modified program in Fig.~\ref{fig:exampleprograms}, difference detector \diffDetectOurs detects the syntactic witness path \(p:=\ell_0\xrightarrow{r=-x;}\ell_1\xrightarrow{x>0}\ell_2\xrightarrow{r=-x;}\ell_3\xrightarrow{!(r<=0)}\ell_\mathrm{err}\). 
Note that \(p\) is executable whenever \(p'\) is executable because the changes (first assignment) do not affect the assume statements on the paths.

\begin{algorithm}[t]
  \caption{Difference detector \diffDetectOurs$(P, P')$}\label{alg:diffcompprop}
\begin{algorithmic}[1]
\REQUIRE deterministic CFAs $P=(L,\ell_0,G,\ell_{err})$, $P'=(L',\ell'_0,G',\ell'_{err})$ 
\OUTPUT difference graph $DG(P,P')$ 

\STATE $\waitlist \assign\{(\ell_0, \ell'_0)\}; \visited \assign \{(\ell_0, \ell'_0)\}; \edges \assign \emptyset;\bad\assign \emptyset; \modified\assign \lambda((\ell_1,\ell_1')).\emptyset;$ 
\WHILE{$\waitlist \neq \emptyset$}
  \STATE pop $(\ell_1,\ell'_1)$ from \waitlist
  \FOR{\textbf{each} $g'=(\ell'_1,op',\ell'_2) \in G'$}
    \IF{$op'$ is assignment}
			\IF{$\exists (\ell_1,op',\ell_2) \in G$}
				\IF{$\modified((\ell_1,\ell'_1))\cap \rd(op')\neq\emptyset$}
				   ~$V_\mathrm{diff}\assign\modified((\ell_1,\ell'_1))\cup\writ(op');$
			  \ELSE
				  ~$V_\mathrm{diff}\assign\modified((\ell_1,\ell'_1))\setminus\writ(op');$
				\ENDIF
				\STATE \postProcess{$((\ell_1,\ell'_1), g',(\ell_2,\ell'_2))$}{$V_\mathrm{diff}$}
			\ELSE
				~\postProcess{$((\ell_1,\ell'_1), g',(\ell_1,\ell'_2))$}{$\modified((\ell_1,\ell'_1))\cup\writ(op')$}
			\ENDIF
    \ELSE[$op'$ assume]
		    \STATE $\ell_p:=\ell_1;$ $V_\mathrm{diff}\assign\modified((\ell_1,\ell'_1));$
				  \WHILE{$\ell_p\neq\ell_\mathrm{err} \wedge \exists (\ell_p, op_\mathrm{a}, \ell_s)\in G: op_\mathrm{a}$ is assignment}
					  \STATE $\ell_p:=\ell_s;$ $V_\mathrm{diff}\assign V_\mathrm{diff}\cup\writ(op_\mathrm{a});$
					\ENDWHILE
				\IF{$\ell_p\neq\ell_\mathrm{err} \wedge (\exists (\ell_p,op,\ell_2) \in G: op=op'\vee ((op'\Rightarrow op) \wedge op'\not\equiv false)$)}
				  \IF{$V_\mathrm{diff}\cap (\rd(op)\cup\rd(op'))\neq\emptyset$}
						\postProcess{$((\ell_1,\ell'_1), g',\ell'_2)$}{$\emptyset$}
			    \ELSE
						~\postProcess{$((\ell_1,\ell'_1), g',(\ell_2, \ell'_2))$}{$V_\mathrm{diff}$}
					\ENDIF
				\ELSE				 
							~\postProcess{$((\ell_1,\ell'_1), g', (\ell_p, \ell'_2))$}{$V_\mathrm{diff}$}
			  \ENDIF
    \ENDIF
  \ENDFOR
\ENDWHILE
\IF{$\ell'_0\in\bad \vee (\ell'_0=\ell'_\mathrm{err}\wedge\ell_0\neq\ell_\mathrm{err})$}
  \textbf{return} $(\{\ell'_0\},\emptyset, \ell'_0,\{\ell'_0\})$
\ENDIF
\STATE $E\assign\{(n,g',n')\in \visited\times G'\times(\visited\cup\bad)\mid (n,g,n')\in\edges\vee (n'\in\bad\wedge\exists (n,g,(\cdot,n'))\in\edges)\};$
\STATE \textbf{return} $(\visited\cup\bad, E, (\ell_0, \ell'_0), \bad)$
\Statex
\STATE \postProcess{$(pred, g', succ)$}{$V_\mathrm{diff}$}
\IF{$succ\in L'$}
  \STATE $\bad\assign\bad\cup\{succ\};$ $\edges\assign\edges\cup\{(pred, g', succ)\};$ 
	\STATE $\visited\assign \visited\setminus\{(\cdot, succ)\in \visited\};$ $\waitlist\assign \waitlist\setminus\{(\cdot, succ)\in \waitlist\};$
\ELSE
	\IF{$succ\in (L\setminus\{\ell_\mathrm{err}\})\times \{\ell'_\mathrm{err}\}$}
	   \postProcess{$(pred, g', \ell'_\mathrm{err})$}{$\emptyset$}
  \ELSE
	  \IF{$succ \notin L\times\Delta \wedge succ\notin\{\ell_\mathrm{err}\}\times L' \wedge (succ\notin\visited\vee V_\mathrm{diff}\not\subseteq\modified(succ))$}
		  \STATE $\waitlist\assign \waitlist\cup\{succ\};$
    \ENDIF	
		\IF{$succ\notin L\times\Delta$}
		  \STATE $\visited\assign \visited\cup\{succ\};$ $\modified(succ)\assign \modified(succ)\cup V_\mathrm{diff};$
		\ENDIF
		 \STATE $\edges\assign\edges\cup\{(pred, g', succ)\};$
	\ENDIF	
\ENDIF
\end{algorithmic}

\end{algorithm}

Algorithm~\ref{alg:diffcompprop} shows  how our difference detector \diffDetectOurs works. 
It gets the original program~\(P\) and the modified program~\(P'\) as inputs.
We assume that both programs are given as deterministic CFAs.
When Alg.~\ref{alg:diffcompprop} finishes, it outputs its computed overapproximation of \(paths^\mathrm{rb}(P, P')\) as a difference graph.
The returned difference graph provides a witness path for each execution path of the modified program that is not contained in the overapproximation.

The outer while loop of Alg.~\ref{alg:diffcompprop} represents the algorithm's main part, which detects and simultaneously records witness paths~\(p\).
To this end, Alg.~\ref{alg:diffcompprop} it performs a step-by-step exploration of the paths of the modified program.
The step-by-step exploration also aims to align the paths to witness paths of the original program.
Thereby, it uses pairs~\((\ell,\ell')\in L\times L'\) of program locations as alignment points that describe where \(p\) and \(p'\) align. 
Algorithm~\ref{alg:diffcompprop} stores the detected alignment points in \(\visited\) and their connection in \(\edges\).
Furthermore, it uses data structure~\(\waitlist\) to keep track of which alignment points need to be explored and, thus, organizes the path exploration.
If a witness path~\(p\) aligned to a prefix of a path~\(p'\) of the modified program cannot be extended to that step of path~\(p'\) that follows after the prefix, i.e., no witness path can be constructed for \(p'\),
 Alg.~\ref{alg:diffcompprop} stops the exploration of \(p'\) and stores its next location in \(\bad\).
The set~\(\bad\) represents the set of regression bug indicator nodes of the difference graph, which characterize paths of the overapproximation.
To detect (dis)agreements on executability between paths and witness path extensions, data structure~\(\modified\) tracks for all aligned locations~\((\ell,\ell')\) the change affected variables, i.e., 
those variables which may have different values at alignment point~\((\ell,\ell')\). 

When the detection and recording of witness paths has finished, lines~19--21 of Alg.~\ref{alg:diffcompprop} build the difference graph.
The algorithm distinguishes two cases.
First, if \(\ell'_0\) is contained in \(\bad\) or \(\ell'_0\) is a new error location, no witness paths were detected because the alignment of the initial locations already failed.
Hence, all execution paths of the modified program  may cause regression bugs.
Therefore, the difference graph returned in line~19 consists of a single regression bug indicator node.
Second, we consider all other cases, in which Alg.~\ref{alg:diffcompprop} computed at least some witness paths.
To encode the witness paths into a difference graph, Alg.~\ref{alg:diffcompprop} uses the data structures \(\visited, \bad\) and \(\edges\), in which it records for each syntactic path either the information about its witness path or that no witness path has been detected.
During the construction of the difference graph, Alg.~\ref{alg:diffcompprop} basically copies the information of paths with and without witness paths into the difference graph.  
Hence, set~\(\bad\) becomes the regression bug indicator nodes.
The graph's nodes are the union of the determined aligned locations~\(\visited\) and the regression bug indicator nodes~\(\bad\).
The graph edges are derived from the connection between determined aligned locations and regression bug indicator nodes, thereby considering that aligned locations may be replaced by regression bug indicator nodes, which are not explored, i.e., they must not have successors.

Next, let us have a closer look at the path exploration with integrated witness path alignment. 
It starts in line~1, aligning the initial location of the two programs (initial alignment point of any path and its witness path) and with an empty set of connections (\(\edges\assign\emptyset;\)).
Also at the initial locations no modifications have been seen yet, i.e., we have not detected that any path for which the construction of a witness path failed (\(\bad\assign\emptyset\)) nor have we obtained knowledge about change affected variables (i.e., \(\modified\) does not track any change affected variable). 
Thereafter, each iteration of the outer while loop explores one aligned pair~\((\ell_1,\ell'_1)\) that has not yet been explored or whose change affected variables altered.
Thereby, it computes one successor per outgoing CFA edge~\(g'\) in the modified program.
The successor and the update of its change affected variables depend on \(g'\)'s operation.

In case of an assignment (lines~5--10), we ignore executability (assignments are always executable) and only check whether an outgoing edge with an identical assignment exists in the original program.
If an identical edge exists, we explore in lockstep and update the change affected variables according to the data dependencies.
For example, given aligned pair~\((\ell_2,\ell'_2)\) and outgoing CFA edge \((\ell'_2,r=-x;,\ell'_3)\), Alg.~\ref{alg:diffcompprop} detects the edge \((\ell_2,r=-x;,\ell_3)\) of the original program, which has an identical assignment, aligns the successors \((\ell_3,\ell'_3)\), and propagates that variable~\(r\) is no longer affected by a modification.
If no identical edge exists, we detect a modification.
Since it is non-trivial to detect a good alignment of original and modified program~\cite{Score}, we decided to (a)~always handle the modification like a new assignment (i.e., no step in the original program and all variables~\(\writ(op')\) become change affected) and (b)~if this assumption was wrong to postpone resynchronization to the beginning of assume operations.\footnote{To be more precise, our implementation also keeps the alignment if an assignment exists that changes the same variables (left out for simplicity of Alg.~\ref{alg:diffcompprop}).}
As an example for a modification, let us consider the aligned pair~\((\ell_0,\ell'_0)\) and corresponding outgoing edge  \((\ell'_0,r=x;,\ell'_1)\).
Algorithm~\ref{alg:diffcompprop} detects the modification and only steps forward in the modified program, thus, aligns \((\ell_0,\ell'_1)\).
Additionally, it propagates that variable~\(r\) is affected by the modification.
In case of a match as well as a modification, procedure~\textsc{process} updates the data structures based on the suggested alignment and determined change affected variables.

In case of an assume operation (lines~11--18), we account for the mismatch of assignments and first perform any postponed synchronization (lines~12--14).
Thereby, we explore all assignments in the original program  until we hit the next assume operation or an error location (i.e., no regression bug possible).
Explored assignments are treated like modifications, i.e., we add their modified variables to the changed affected variables~\(V_\mathrm{diff}\) of the alignment. 
After resynchronization, line~15 checks whether both programs provide matching assume operations.
Assume operations match if they are identical or the original assume is more general\footnote{To get deterministic difference graphs for our proofs, \(op'\) must be satisfiable, too.}. 
If a matching assume operation is found, we continue with the executability check (line~16).
We can only safely guarantee executability if the assume operations are affected by the modification.
If the matching check fails, we consider the assume of the modified program as a new operation.
Note that the assume operation may restrict the executability of the modified path and, thus, the relevant inputs.
However, it does not restrict the executability of the witness path on relevant inputs. 
Hence, only if matching assumes are affected by the modification (line~16), we failed to detect a witness path and stop the alignment.
To this end, we let the alignment end in the location of the modified path, which describes a regression bug indicator node and for which change affected variables are irrelevant.
In all other cases (lines~17 and 18), we align original and modified program accordingly and since assume operations do not change the data state, forward the already determined change affected variables~\(V_\mathrm{diff}\).  
Again, procedure~\textsc{process} updates the data structures.
As an example, let us consider aligned pair~\((\ell_0,\ell'_1)\), outgoing CFA edge \((\ell'_1,x>0,\ell'_2)\) and \(\modified((\ell_0,\ell'_1))=\{r\}\).
Algorithm~\ref{alg:diffcompprop} first explores the edge \((\ell_0, r=-x;,\ell_1)\) of the original when performing resynchronizes and thereby adds \(r\) to the change affected variables~$V_\mathrm{diff}$. 
Then, it finds a matching assume edge \((\ell_1, x>0, \ell_2)\) in the original program. 
Since both assume operation do not consider any change affected variables ($V_\mathrm{diff}=\{r\}$), Alg.~\ref{alg:diffcompprop} aligns \((\ell_2,\ell'_2)\) and propagates \(V_\mathrm{diff}\).

Lastly, let us have a look at the procedure~\textsc{process}, which is responsible for updating the data structures.
To perform the update, it gets the proposed successor (edge) and the determined change affected variables.
In case the witness path extension failed (lines~22--24), we stop exploration, add the successor to \(\bad\), add the successor edge, and remove any aligned locations considering the successor's location form \(\visited\) and \(\waitlist\).
The edges are adapted later in line~19.
In case alignment (extension) has been proposed (lines~26--32), we first need to account for violations by paths of the modified program.
If only the path of the modified program violates the property, we recursively call \textsc{process}, but now exchanging the alignment with the location of the modified path, a regression bug indicator node.\footnote{To further exclude non-regression bugs, our implementation first searches for an error location of the original program in its follow-up sequence of assignments.}
Otherwise, line~28--29 mark the successor for (re)exploration if we neither end in a regression bug indicator node (difference already established), nor align with an error location of the original program (regression bug impossible), nor the successor and all change affected variables are known (information already considered for exploration). 
If we do not end in a regression bug indicator node, we need to ensure that successor information and the detected change affected variables are available for exploration and graph construction. 
Hence, lines~30--31 add them to \(\visited\) and \(\modified\).
Finally, line~32 adds the corresponding edge to \(\edges\).

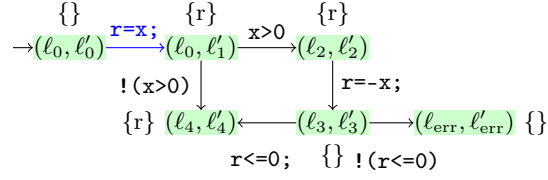
\begin{figure}[t]
\centering
\begin{tikzpicture}
		\footnotesize
		\node[ inner sep=0cm,fill=green!20] (l0) at (-1.75,1) {$(\ell_0,\ell'_0)$};
		\node[ inner sep=0cm,fill=green!20] (l1) at (0,1) {$(\ell_0,\ell'_1)$};
		\node[ inner sep=0cm,fill=green!20] (l2) at (1.75,1) {$(\ell_2, \ell'_2)$};
		\node[ inner sep=0cm,fill=green!20] (l3) at (1.75,0) {$(\ell_3,\ell'_3)$};
		\node[ inner sep=0cm,fill=green!20] (l4) at (0,0) {$(\ell_4,\ell'_4)$};
		\node[ inner sep=0cm,fill=green!20] (l5) at (3.5,0) {$(\ell_\mathrm{err},\ell'_\mathrm{err})$};
		\path[->,draw] (-2.5,1) -- (l0);
		\path[->,draw,blue] (l0) -- node[above,font=\footnotesize]{\texttt{r=x;}} (l1);
		\path[->,draw] (l1) -- node[above,font=\footnotesize]{{\texttt{x>0}}} (l2);
		\path[->,draw] (l1) -- node[left,font=\footnotesize]{{\texttt{!(x>0)}}} (l4);
		\path[->,draw] (l2) -- node[right,font=\footnotesize]{\texttt{r=-x;}} (l3);
		\path[->,draw] (l3) -- node[sloped,below, yshift=-2ex, xshift=-0.5ex,font=\footnotesize]{\texttt{r<=0;}} (l4);
		\path[->,draw] (l3) -- node[sloped, below, yshift=-2ex, xshift=0.5ex,font=\footnotesize]{\texttt{!(r<=0)}} (l5);
		\node at (l0.north) [anchor=south, font=\footnotesize] {\{\}};
		\node at (l1.north) [anchor=south, font=\footnotesize] {\{r\}};
		\node at (l2.north) [anchor=south, font=\footnotesize] {\{r\}};
		\node at (l3.south) [anchor=north, font=\footnotesize] {\{\}};
		\node at (l4.west) [anchor=east, font=\footnotesize] {\{r\}};
		\node at (l5.east) [anchor=west, font=\footnotesize] {\{\}};
	\end{tikzpicture}
\caption{Difference graph of Alg.~\ref{alg:diffcompprop} for original and modified CFA from Fig.~\ref{fig:exampleprograms}}
\label{fig:exampleDG}
\end{figure}

Next, we demonstrate Alg.~\ref{alg:diffcompprop} on our example from Fig.~\ref{fig:exampleprograms}.
Figure~\ref{fig:exampleDG} shows the computed difference graph.
Algorithm~\ref{alg:diffcompprop} starts with the pair of initial locations~\((\ell_0,\ell'_0)\) and no change affected variables. 
Next, it detects the modified assignment, which is handled as a new statement, and, thus, only moves forward in the modified program assuming the the assigned variable~\(r\) to be affected.
Thereafter, Alg.~\ref{alg:diffcompprop} explores the first two assume operations of the modified program.
For both assume operation, it first performs the resynchronization that takes the first assignment of the original program into and adds its assigned variable~\(r\) to the changed affected variables~\(V_\mathrm{diff}=\{r\}\).
Then, it detects the identical assume operations in the original program, which do not consider any changed affected variables.
Hence, Alg.~\ref{alg:diffcompprop} propagates the changed affected variables~\(\{r\}\), and continues exploration.
Afterwards, it explores the second assignment of the modified program.
It detects an identical assignment in the original program and removes \(r\) from the change affected variables because \(r\) is redefined and no longer affected.
Finally, Alg.~\ref{alg:diffcompprop} explores the assume operations resulting from the assertion.
For both, there exists an identical assume operation in the original program, which is not affected by the modification.
Thus, the assumptions are explored in lockstep and the change affected variables are forwarded.

Our goal is to use Alg.~\ref{alg:diffcompprop} as a difference detector for difference detection with conditions.
To keep soundness of difference detection, we hence need to show that Alg.~\ref{alg:diffcompprop} constructs proper difference graphs.
Theorem~\ref{theo:detector}, whose proof is available in the appendix, claims this.
\begin{theorem}\label{theo:detector}
Algorithm~\ref{alg:diffcompprop} returns a difference graph.
\end{theorem}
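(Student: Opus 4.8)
\section*{Proof proposal for Theorem~\ref{theo:detector}}

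The plan is to check, for the tuple returned by Algorithm~\ref{alg:diffcompprop}, both requirements of a difference graph: the structural conditions ($n_0\in N$, $\bad\subseteq N$, $E\subseteq N\times G'\times N$) and the soundness property relating $paths^\mathrm{rb}(P,P')$ to $\bad$. The structural part is routine. In the degenerate case (line~19) the output is $(\{\ell'_0\},\emptyset,\ell'_0,\{\ell'_0\})$, so trivially $n_0=\ell'_0\in N=\bad$ and $E=\emptyset$. In the main case (lines~20--21) the output is $(\visited\cup\bad,E,(\ell_0,\ell'_0),\bad)$; here $\bad\subseteq\visited\cup\bad$ and $E\subseteq\visited\times G'\times(\visited\cup\bad)$ hold directly by the definition of $E$, and $(\ell_0,\ell'_0)\in\visited$ because this pair is inserted in line~1 and is only ever removed from $\visited$ when some call marks $\ell'_0$ as bad --- precisely the situation excluded by line~19 (and $\bad$ only grows, so once excluded it stays so). Thus structural correctness reduces to bookkeeping, and I would note that termination --- hence that the algorithm \emph{returns} --- follows from the finiteness of $L\times L'$ together with the fact that each $\modified$-value grows monotonically inside a bounded powerset while $\bad$ grows monotonically.

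The heart of the proof is the soundness property, which I would establish through a \emph{witness-path invariant} maintained along every explored, non-$\bad$ DG-path. Concretely, I would prove by induction on the exploration that whenever a DG-path $n_0\xrightarrow{g'_1}\dots\xrightarrow{g'_i}n_i$ with $n_i=(\ell,\ell'_i)\notin\bad$ tracks a syntactic prefix of a modified-program path, then for every initial data state $c'_0$ on which that prefix is executable there is a \emph{witness path} $p$ of $P$ from $\ell_0$ to $\ell$ that (i)~is executable on $c'_0$, (ii)~yields a data state agreeing with the modified one on every variable outside $\modified(n_i)$, and (iii)~visits $\ell_\mathrm{err}$ no later than the prefix visits $\ell'_\mathrm{err}$. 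The key point is that $\modified$ (i.e.\ $V_\delta$) over-approximates the set of variables whose values may differ between the two executions. This is preserved by the assignment cases --- a matched assignment reading no affected variable clears its target via $\modified\setminus\writ(op')$, while a matched assignment reading an affected variable or an unmatched ``modification'' marks $\writ(op')$ as affected --- and by the resynchronization loop (lines~12--14), which conservatively adds $\writ(op_\mathrm{a})$ for every skipped original assignment. For the assume cases I would combine the matching check of line~15 ($op=op'$ or $op'\Rightarrow op$) with the executability check of line~16: when $\modified\cap(\rd(op)\cup\rd(op'))=\emptyset$ the two executions agree on $\rd(op')$ and $\rd(op)$, so $c'_i\models op'$ implies $c'_i\models op$ and hence $c_i\models op$, giving executability of the witness assume whenever the modified assume is enabled; the complementary case sends the node to $\bad$. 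Property~(iii) is delivered by the cases that align at $(\ell_\mathrm{err},\ell'_2)$ (the original reaches its error during resynchronization, so no further exploration and no $\bad$-marking), whereas the case where only the modified path reaches $\ell'_\mathrm{err}$ pushes the successor into $\bad$.

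With the invariant in hand the soundness property follows by contradiction. Take any $\pi'\in paths^\mathrm{rb}(P,P')$ and any DG-path $n_0\to\dots\to n_k$ tracking its first $k$ edges. Since the algorithm records an edge for every outgoing modified edge of every visited node and stops only at $\bad$, and since the final edge set redirects exactly those successors whose modified-location became a $\bad$ node, the DG-path can be continued along $\pi'$'s remaining syntactic edges. It either reaches a node of $\bad$ --- which is exactly the extension $n_k\to\dots\to n_{k+m}\in\bad$ demanded by the definition --- or it aligns the entire path without ever entering $\bad$. In the latter case the invariant yields a witness path $p$ of $P$ that starts from $\pi'$'s initial data state $c'_0$, is executable, and reaches $\ell_\mathrm{err}$ no later than $\pi'$; as $\pi'$ reaches $\ell_\mathrm{err}$, so does $p$, producing a path in $paths^\mathrm{err}(P)$ with $c_0=c'_0$ and contradicting $\pi'\in paths^\mathrm{rb}(P,P')$. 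Hence every such DG-path extends into $\bad$. The degenerate line~19 case is immediate, since the only DG-path is the single node $n_0=\ell'_0\in\bad$, giving the extension with $m=0$.

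I expect the main obstacle to be the precise formulation and the case-by-case preservation of the witness-path invariant across every branch of the assume/assignment handling and of \textsc{process}, especially the interaction with the fact that $\modified(succ)$ accumulates the \emph{union} of affected-variable sets over all incoming DG-paths (line~31) and that nodes are re-explored when this set grows (the guard $V_\mathrm{diff}\not\subseteq\modified(succ)$ in line~28). I would argue that taking unions only enlarges the over-approximation, so that agreement ``outside $\modified(n_i)$'' stays sound, and that the re-exploration guard guarantees the finally recorded edges and affected-variable sets constitute a consistent fixpoint; this is what legitimizes evaluating the per-path witness construction against the final $\modified$-values that are actually used to assemble the returned graph.
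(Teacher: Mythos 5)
Your proposal is correct and takes essentially the same approach as the paper: your witness-path invariant --- an aligned, executable original-program path agreeing with the modified one outside \(\modified\), preserved through the assignment/assume/\textsc{process} case analysis evaluated against the final \(\modified\)-values --- is precisely the content of the paper's Lemmas~\ref{lem:extend} and~\ref{lem:align}, and your extend-then-contradict argument for the soundness property mirrors the paper's proof of Theorem~\ref{theo:detector}. The only organizational differences are that the paper isolates determinism of the constructed graph as a separate Lemma~\ref{lem:deterministic} (needed to identify the universally quantified DG-path with the one built in the induction --- a fact your induction step uses implicitly) and derives the final contradiction from the structural impossibility of nodes \((\ell,\ell'_\mathrm{err})\) with \(\ell\neq\ell_\mathrm{err}\) appearing in \(\visited\), rather than from an explicit ``error no later'' clause carried in the invariant.
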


\section{Evaluation}
In our evaluation, we aim to compare difference verification with our difference condition extractor (\diffDetectOurs plus Alg.~\ref{alg:cond}) against a full verification and against difference verification with the syntax-based difference condition extractor~\diffCond from~\cite{DiffCond} in terms of efficacy (i.e., number of solved verification tasks) and efficiency (i.e., CPU time).

\subsection{Experimental Setup}

\textbf{Verification Tasks.}
We use two sets of verification tasks in our evaluation: a set of combination tasks and a set of regression tasks.
Each task in any of the two sets consist of a original and modified program, which both consider the property that the function \texttt{reach\_error()} is never called.
Furthermore, at most one of the two programs in each task may actually call the function \texttt{reach\_error()} and, thus, violate the property.
This is important to ensure that verification and difference verification can only detect regression bugs.

The set of combination tasks contains the \num{10426}~ tasks already utilized in Beyer et al.'s evaluation of difference verification with conditions~\cite{DiffCond}
The tasks of this set are organized in five categories (1)~\texttt{eca05+token}, (2)~\texttt{gcd+newton}, (3)~\texttt{pals+eca12}, (4)~\texttt{sfifo+token}, (5)~\texttt{square+softflt} and their programs are the result of combining two verification tasks.

The set of regression tasks were derived by Beyer et al.~\cite{PrecisionReuse} from
revisions of \num{62}~Linux device drivers and used to evaluate their incremental verification technique. 
Each program in the task set relates to one revision of a device driver and examines one of six API usage rules. The rule examination is woven into the program s.t.\ a rule violation leads to a call of the function \texttt{reach\_error()}. 
Furthermore, original and modified program examine the same rule and the modified program stems from the next revision available for which the ground truth is known
Nevertheless, original and modified program typically differ in several hundred lines.
 and the ground truth for the modified program is known. 
The resulting task set contains \num{3936}~so-called regression tasks.

\textbf{Verifiers.}
For our experiments, we select two different off-the-shelf verifiers from the SV-COMP~2022 participants based on the performance of the SV-COMP~2022 participants in the categories \texttt{Reach\-Safety-Combinations} and \texttt{Software\-Systems-DeviceDriversLinux64-ReachSafety}, the categories closest to our tasks.
More concretely, we select the best tool~\cpacheckerSVCOMP and \esbmc, the next best tool that uses a different verification platform. 
\cpacheckerSVCOMP runs a feature-based strategy selection~\cite{CPAcheckerStrategy} to select the most suitable combination of sequential analyses.
These combination may e.g., apply bounded, explicit, or predicate model checking or k-induction.
\esbmc~\cite{ESBMC-SVCOMP} applies k-induction enhanced with invariants. 
We add to these two off-the-shelf verifiers, which are no conditional verifiers, i.e., their difference verifiers will need to use reducer-based conditional verifier approach~\cite{ReducerCMC}, the native conditional verifier \predicate~\cite{CMC} that combines predicate model checking with adjustable block-encoding~\cite{ABE}, abstracts at loop heads and applies counterexample-guided abstraction refinement 
with lazy refinement~\cite{LazyAbstraction} to compute relevant predicates.

We use all three verifiers~\(V\) standalone for full verification, i.e., they verify the complete modified program, and as verifier component in difference verifiers.
For the difference verifiers, we use reducer-based CMC~\cite{ReducerCMC} to combine every verifier~\(V\)  with the syntax-based difference condition extractor \diffCond from~\cite{DiffCond}~(\(V^{\diffSynRed}\)) and our 
difference condition 
 extractor (\(V^{\diffOursRed}\)).
Also, we natively combine the conditional verifier~\predicate with the syntax-based extractor \diffCond~(\predicateDiffSynNative) and our extractor~(\predicateDiffOursNative).

For the two off-the-shelf verifiers,
we use their SV-COMP 2022~\cite{SVCOMP22} version, while for \predicate, the reducer, and the two difference condition extractors we use the more recent 
\cpachecker version~\num{40843}, which includes all four components.

\textbf{Computing Environment.}
We run our experiments on machines with an Intel Xeon E3-1230 v5 CPU (frequency of 3.4\,\GHz, 8 processing units) and 33\,GB of memory.
All machines run an Ubuntu 20.04 (Linux kernel~5.4.0).
Furthermore,  we restrict each verifier runto 4~processing units, 15\,$\min$ of CPU time, and 15\,GB of memory using \benchexec~\cite{BenchExec} version~3.11.

\subsection{Experimental Results}

\textbf{RQ~1 Is Difference Verification with Our Extractor Better Than a Complete Verification?}
The evaluation of Beyer et al.~\cite{DiffCond} revealed that for a significant number of tasks difference verification with the syntax-based difference condition extractor~\diffCond is more effective or more efficient than verifying the complete modified program of the task. 
With our first research question we want to find out if these results carry over to difference verification with our new difference condition extractor.

\begin{table}[t]%
\caption{Number of correct proofs (\cproof), of correct alarms (\calarm), and of incorrectly solved tasks (\iresult) detected by each verifier and their related difference verifiers, which use our difference condition extractor}
\label{tab:effectivityNumbers-comp-verify}
\centering
\scalebox{0.8325}{
\begin{tabular}{l  >{\columncolor{black!10}[\tabcolsep]}d{0} >{\columncolor{black!10}[\tabcolsep]}d{0} >{\columncolor{black!10}[\tabcolsep]}d{0}  d{0} d{0} d{0}  >{\columncolor{black!10}[\tabcolsep]}d{0} >{\columncolor{black!10}[\tabcolsep]}d{0} >{\columncolor{black!10}[\tabcolsep]}d{0}  d{0} d{0} d{0}  >{\columncolor{black!10}[\tabcolsep]}d{0} >{\columncolor{black!10}[\tabcolsep]}d{0} >{\columncolor{black!10}[\tabcolsep]}d{0}  d{0} d{0} d{0} }

 &  \multicolumn{3}{>{\columncolor{black!10}[\tabcolsep]}c}{\smaller eca05+token} & \multicolumn{3}{c}{\smaller gcd+newton} & \multicolumn{3}{>{\columncolor{black!10}[\tabcolsep]}c}{\smaller pals+eca12} & \multicolumn{3}{c}{\smaller sfifo+token} & \multicolumn{3}{>{\columncolor{black!10}[\tabcolsep]}c}{\smaller square+softflt} & \multicolumn{3}{c}{\smaller regression}\\
 & \multicolumn{3}{>{\columncolor{black!10}[\tabcolsep]}c}{\smaller (\num{2340}+\num{1300})} & \multicolumn{3}{c}{\smaller (\num{1352}+\num{572})} & \multicolumn{3}{>{\columncolor{black!10}[\tabcolsep]}c}{\smaller (\num{1700}+\num{1050})} & \multicolumn{3}{c}{\smaller (\num{1206}+\num{663})} & \multicolumn{3}{>{\columncolor{black!10}[\tabcolsep]}c}{\smaller (\num{165}+\num{75})} & \multicolumn{3}{c}{\smaller (\num{3936}+\num{0})}\\
 & \multicolumn{1}{>{\columncolor{black!10}[\tabcolsep]}c}{\cproof} & \multicolumn{1}{>{\columncolor{black!10}[\tabcolsep]}c}{\calarm} & \multicolumn{1}{>{\columncolor{black!10}[\tabcolsep]}c}{\iresult}  & \multicolumn{1}{c}{\cproof} & \multicolumn{1}{c}{\calarm} & \multicolumn{1}{c}{\iresult} & \multicolumn{1}{>{\columncolor{black!10}[\tabcolsep]}c}{\cproof} & \multicolumn{1}{>{\columncolor{black!10}[\tabcolsep]}c}{\calarm} & \multicolumn{1}{>{\columncolor{black!10}[\tabcolsep]}c}{\iresult}  & \multicolumn{1}{c}{\cproof} & \multicolumn{1}{c}{\calarm} & \multicolumn{1}{c}{\iresult} & \multicolumn{1}{>{\columncolor{black!10}[\tabcolsep]}c}{\cproof} & \multicolumn{1}{>{\columncolor{black!10}[\tabcolsep]}c}{\calarm} & \multicolumn{1}{>{\columncolor{black!10}[\tabcolsep]}c}{\iresult} & \multicolumn{1}{c}{\cproof} & \multicolumn{1}{c}{\calarm} & \multicolumn{1}{c}{\iresult}\\
\toprule
\predicate & 912 & \textbf{1040} & 0 & 0 & 520 & 0 & 0 & 50 & 0 & 558 & \textbf{507} & 0 & 22 & 51 & 0 & 2595 & 0 & 0  \\ 
\rowcolor{yellow!10}\predicateDiffOursNative & \textbf{1400} & 985 & 0 & \textbf{156} & 520 & 0 & \textbf{125} & 75 & 0 & 594 & 488 & 0 & 115 & \textbf{75} & 0 & 2663 & 0 & 0\\
\rowcolor{yellow!10}\predicateDiffOurs & 1390 & 955 & 0 & \textbf{156} & \textbf{572} & 0 & \textbf{125} & 50 & 0 & \textbf{639} & 456 & 0 & \textbf{135} & 60 & 0 & \textbf{2685} & 0 & 0\\
\midrule
\cpachecker & 633 &  \textbf{1296} & 0 & 0 & 494 & 0 & 0 & 100 & 0 & 434 & 510 & 0 & 0 & \textbf{75} & 0 & \textbf{3931} & 0 & 0 \\
\rowcolor{yellow!10}\cpacheckerDiffOurs & \textbf{1119} & 1252& 0 & \textbf{156} & \textbf{520} & 0 & \textbf{671} & \textbf{500} & 0 & \textbf{480} & \textbf{638} & 0 & \textbf{117} & 60 & 0 & 3618 & 0 &0 \\
\midrule
\esbmc & 0 & \textbf{1125} & 0 & 570 & 572 & 0 & \textbf{198} & \textbf{530} & 0 & 0 & \textbf{663} & 0 & \textbf{165} & 75 & 0 & 0 & 0 & 0 \\
\rowcolor{yellow!10}\esbmcDiffOurs & 0 & 900 & 0 & \textbf{880} & 572 & 0 & 0 & 70 & 10 & \textbf{101} & 618 & 0 & 156 & 75 & 0 & 0 & 0 & 0 \\ 
\bottomrule
\end{tabular}
}
\end{table}

We start with examining effectiveness.
Table~\ref{tab:effectivityNumbers-comp-verify} shows per task category the number of correctly proved~(\cproof) and disproved~(\calarm) tasks as well as the number of incorrectly solved tasks (\iresult) by a verifier and its difference verifier that uses our difference condition extractor. Note that all incorrectly solved tasks are false proofs. The total number of tasks with and without property violations in category are shown at the top of the table. 

When comparing the number of correctly proved tasks~(\cproof), we observe that for each verifier, there exist categories for which the difference verifier (highlighted in yellow) has a higher number, i.e., correctly proves more tasks. 
Often there, also exist categories in which the difference verifier correctly finds more property violations (\calarm). 
In some cases, the difference verifier succeeds while the full verification runs out of resources.
In other cases, difference verification excludes paths with program features not supported by the underlying verifier.
In summary, difference verification with our difference condition extractor can be more effective than a full verification but it does not dominate the full verification.
For \predicate and \cpachecker our difference verification approach is rarely less and often more effective.
However, for \esbmc the difference verifier only performs better in some cases and even misses \num{10}~bugs. Since \cpacheckerDiffOurs correctly detected those \num{10}~bugs, the bugs seem to be an issue of \esbmc.
One important reason for a worse performance of the difference verifier,  which Beyer et al.~\cite{DiffCond} also reported, is the residual program.
The residual program is generated and then verified by the reducer-based conditional verifier and often changes the program structure.
For example, we observed that the residual programs may contain deeply nested branching structures that regularly exceeded the maximum bracket nesting level of the \esbmc parser.
Also, residual programs heavily use goto statements, but no while or for loop constructs, which makes loop detection more difficult for \esbmc.
However, the issues with the residual program are orthogonal to difference verification with conditions. In particular, they do not stem from difference verification with conditions, but how the reducer is implemented and how non-conditional verifiers are transformed into conditional verifiers.

\begin{figure}[t]
	\centering
	  \includegraphics[scale=0.35]{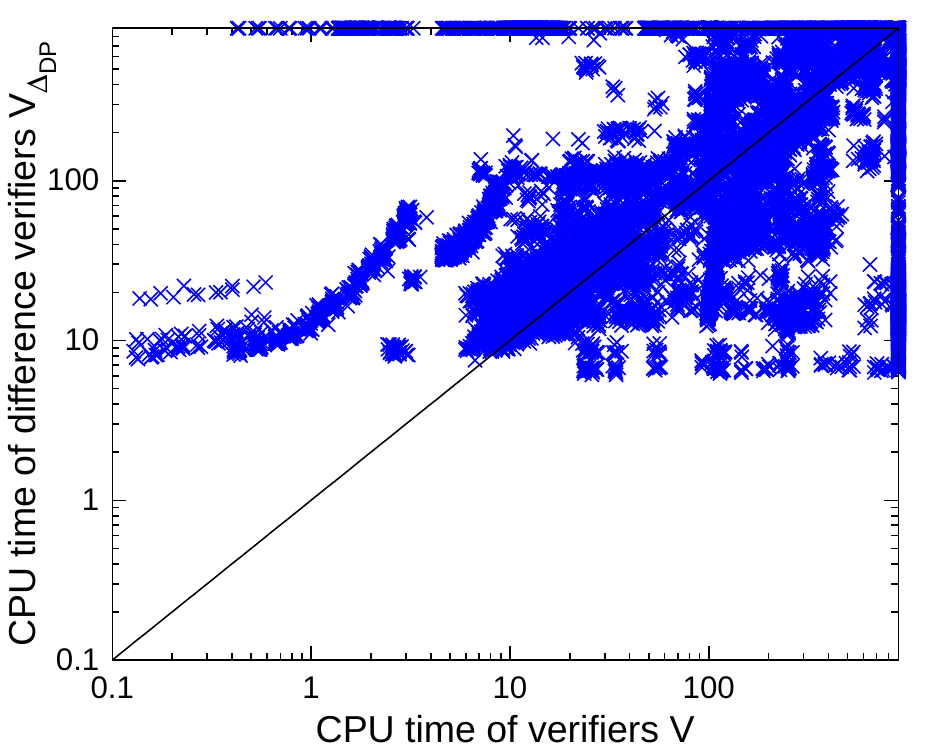}
	\caption{CPU time for verification (x-axis) vs.\ CPU time for difference verification with our extractor (y-axis)}
	\label{fig:VvsDiffV}
\end{figure}

Next, we continue our study with the examination of efficiency.
To this end, we compare the runtime of the complete verification with the runtime of the difference verifier.
We measure runtime in CPU time.
Figure~\ref{fig:VvsDiffV} shows a scatter plot that compares for all verifiers the CPU time the verifier uses to analyze task~\(t\) (x-axis) with the CPU time the respective difference verifier requires to inspect the same task~\(t\) (y-axis). In case a task is not solved, we use the maximum CPU time of \num{900}\,s.
Note that for \esbmc we exclude the 10~tasks that the difference verifier solved incorrectly, i.e., the scatter plot does not consider times of incorrectly solved tasks.
We observe that a significant number of data points (40\%) in the scatter plot are below the diagonal, i.e., for those task-verifier pairs the difference verification is faster than the complete verification of the modified program.
In further cases (about 20\%), the conditional verifier is still faster than complete verification, but conditional verification plus difference condition extracting takes longer than the complete verification.
A detailed inspection reveals that for some tasks categories like \texttt{square+softflt} the costs of the extractor can be compensated by the saved time of the conditional verifier, while this hardly the case for tasks of the \texttt{pals+eca12} category. We also observe differences between verifiers.

\llbox{
To sum up, our results confirm the observations of Beyer et al~\cite{DiffCond}.
Also, difference verification with our condition extractor may be more effective and efficient than complete verification.
}

\begin{figure}[t]
	\centering
		\subfloat[\#condition~states generated by syntax-based (x-axis) and by our (y-axis) difference condition extractor  \label{fig:diffExtractorCompare:states}]{\includegraphics[scale=0.35]{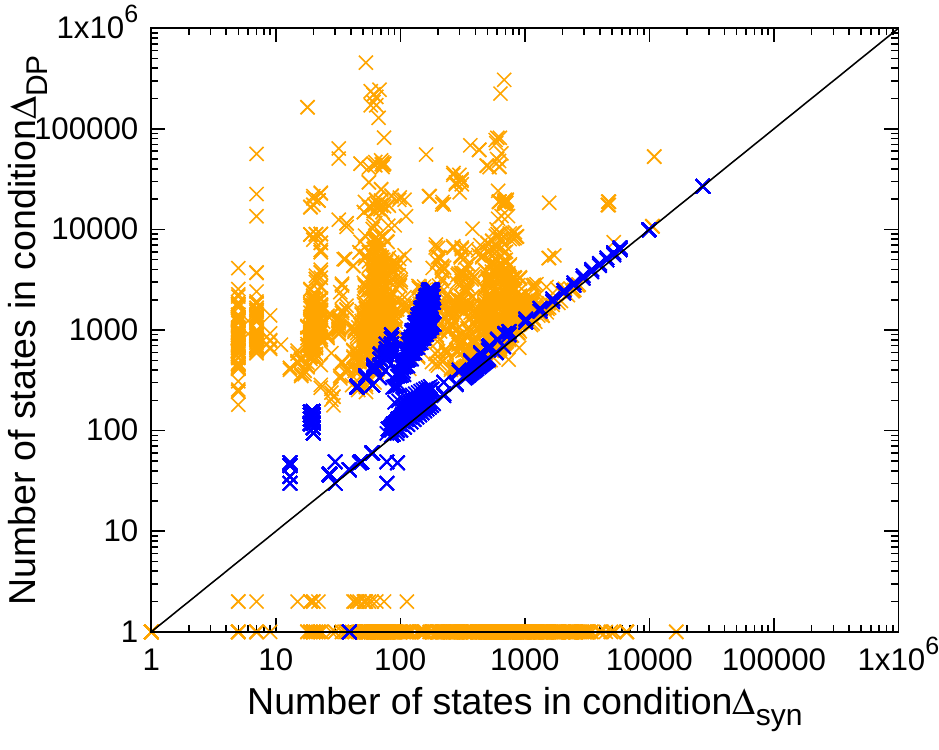}}
		\hfill
		\subfloat[CPU time of syntax-based difference condition extractor (x-axis) and of our difference condition extractor (y-axis)\label{fig:diffExtractorCompare:time}]{\includegraphics[scale=0.35]{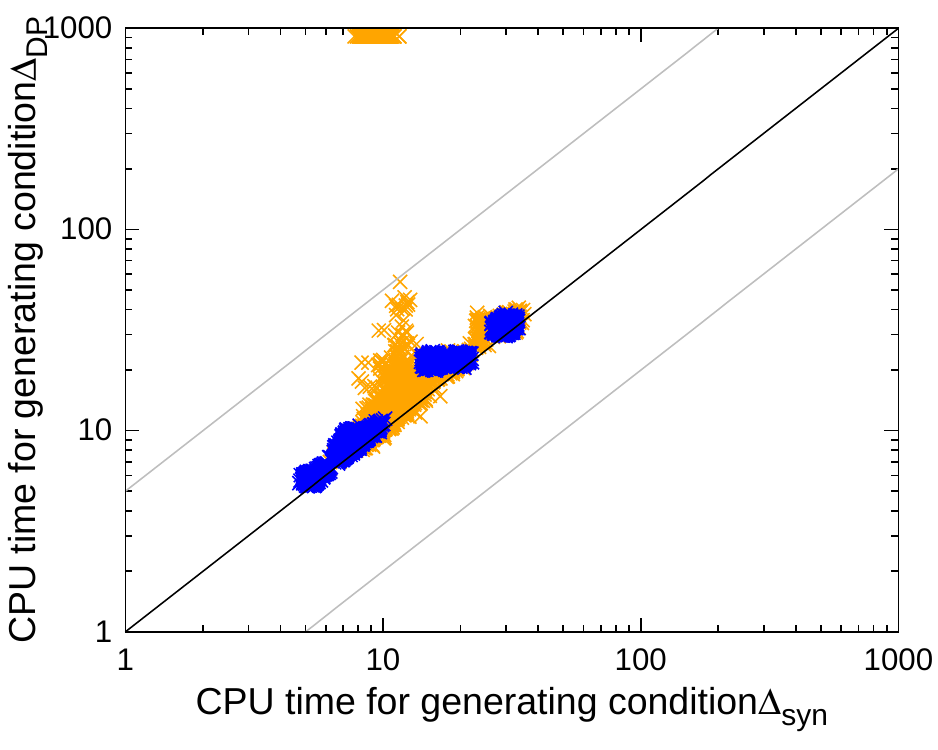}}
	\caption{Comparison of the two difference condition extractors}
	\vspace{-1em}
\end{figure}

\textbf{RQ~2 Do \diffCond and Our Difference Condition Extractor Differ?}
The previous research question has shown that difference verification with our extractor can be beneficial. 
Since this is also true for difference verification with the syntax-based extractor~\diffCond~\cite{DiffCond}, we may speculate that this is only because there is no difference in the extractors.
This research question investigate whether this hypothesis is true.

Difference condition extractors may affect the performance of the difference verifier in two ways, namely if they~(a)~generate different conditions or (b)~differ in performance.
We start with examining whether the generated conditions differ.
We forego a semantic difference and focus on a syntactical difference of conditions, which is easier to investigate. Note that this is possible because all our difference verifiers unroll the program along the condition.
Hence, a difference in syntax, especially in structure, already affects the  performance of the difference verifier.
As a proxy for structural difference, we use the number of states in the condition.
Figure~\ref{fig:diffExtractorCompare:states} shows a scatter plot that compares for each task (except for \num{286}~regression tasks for which our extractor failed mostly due to timeout) the number of states in the condition generated by the syntax-based difference condition extractor \diffCond (\(\diffSynUn\), x-axis) with the number of states in the condition produced by our difference condition extractor~(\(\diffOursUn\), y-axis).
We note that most data points in Fig.~\ref{fig:diffExtractorCompare:states} are above or below the diagonal, i.e., the conditions differ in the number of states.
A detailed inspection of our data reveals that only for \num{317}~tasks  (<\num{5}\% of all tasks) the condition size is the same. 
In addition, the conditions generated by our extractor are larger for
While for \num{82}\% 
of the combination tasks (blue data points) and for \num{55}\%  of the regression tasks (orange data points). We note that the generated conditions may be larger by several magnitudes.
However,there also exist \num{1676}~data points on the x-axis, i.e., our extractor generates conditions with a single accepting state and, hence, proved the absence of regression bugs. In contrast, the syntax-based difference verification extractor only generates conditions with a single accepting state for \num{75}~tasks, for which our extractor produces the same condition.
In summary, we conclude that our extractor produces more precise conditions than \diffCond.

We continue to investigate the performance difference~(b) in terms of runtime of the difference condition extractors.
As in the previous research question, we compare runtime measured in CPU time.
Figure~\ref{fig:diffExtractorCompare:time} shows a scatter plot that compares for each task the CPU time taken by \diffCond~(\(\diffSynUn\), x-axis) with the CPU time needed by our extractor~(\(\diffOursUn\), y-axis).
We note that most of the data points are above the diagonal, i.e., our extractor takes longer. An analysis of the data revealed that our extractor takes up to 5-times longer with a median and mean increase between \num{8}-\num{40}\% per category. Furthermore, it even times out for \num{285}~regression tasks.
The increase could have been expected considering that our constructor produces larger and more precise conditions.
Nevertheless, the runtime of our difference condition extractor remains mostly reasonable, i.e., it typically takes less than \num{10}\% of the runtime limit of the difference verifier (\num{900}\,s).

\llbox{
In summary, our extractor computes more precise conditions, but is slower.
Hence, we confirm the difference of the two difference condition extractors.
}

\begin{table}[t]%
\caption{Number of correct proofs (\cproof), of correct alarms (\calarm), and of incorrectly solved tasks (\iresult) detected by each verifier and their related difference verifiers}
\label{tab:effectivityNumbers-comp-diffV}
\centering
\scalebox{0.8325}{
\begin{tabular}{l  >{\columncolor{black!10}[\tabcolsep]}d{0} >{\columncolor{black!10}[\tabcolsep]}d{0} >{\columncolor{black!10}[\tabcolsep]}d{0}  d{0} d{0} d{0}  >{\columncolor{black!10}[\tabcolsep]}d{0} >{\columncolor{black!10}[\tabcolsep]}d{0} >{\columncolor{black!10}[\tabcolsep]}d{0}  d{0} d{0} d{0}  >{\columncolor{black!10}[\tabcolsep]}d{0} >{\columncolor{black!10}[\tabcolsep]}d{0} >{\columncolor{black!10}[\tabcolsep]}d{0}  d{0} d{0} d{0} }

 &  \multicolumn{3}{>{\columncolor{black!10}[\tabcolsep]}c}{\smaller eca05+token} & \multicolumn{3}{c}{\smaller gcd+newton} & \multicolumn{3}{>{\columncolor{black!10}[\tabcolsep]}c}{\smaller pals+eca12} & \multicolumn{3}{c}{\smaller sfifo+token} & \multicolumn{3}{>{\columncolor{black!10}[\tabcolsep]}c}{\smaller square+softflt} & \multicolumn{3}{c}{\smaller regression}\\
 & \multicolumn{3}{>{\columncolor{black!10}[\tabcolsep]}c}{\smaller (\num{2340}+\num{1300})} & \multicolumn{3}{c}{\smaller (\num{1352}+\num{572})} & \multicolumn{3}{>{\columncolor{black!10}[\tabcolsep]}c}{\smaller (\num{1700}+\num{1050})} & \multicolumn{3}{c}{\smaller (\num{1206}+\num{663})} & \multicolumn{3}{>{\columncolor{black!10}[\tabcolsep]}c}{\smaller (\num{165}+\num{75})} & \multicolumn{3}{c}{\smaller (\num{3936}+\num{0})}\\
 & \multicolumn{1}{>{\columncolor{black!10}[\tabcolsep]}c}{\cproof} & \multicolumn{1}{>{\columncolor{black!10}[\tabcolsep]}c}{\calarm} & \multicolumn{1}{>{\columncolor{black!10}[\tabcolsep]}c}{\iresult}  & \multicolumn{1}{c}{\cproof} & \multicolumn{1}{c}{\calarm} & \multicolumn{1}{c}{\iresult} & \multicolumn{1}{>{\columncolor{black!10}[\tabcolsep]}c}{\cproof} & \multicolumn{1}{>{\columncolor{black!10}[\tabcolsep]}c}{\calarm} & \multicolumn{1}{>{\columncolor{black!10}[\tabcolsep]}c}{\iresult}  & \multicolumn{1}{c}{\cproof} & \multicolumn{1}{c}{\calarm} & \multicolumn{1}{c}{\iresult} & \multicolumn{1}{>{\columncolor{black!10}[\tabcolsep]}c}{\cproof} & \multicolumn{1}{>{\columncolor{black!10}[\tabcolsep]}c}{\calarm} & \multicolumn{1}{>{\columncolor{black!10}[\tabcolsep]}c}{\iresult} & \multicolumn{1}{c}{\cproof} & \multicolumn{1}{c}{\calarm} & \multicolumn{1}{c}{\iresult}\\
\toprule
\predicateDiffSynNative & 1395 & \textbf{987} & 0 & 48 & 520 & 0 & 10 & 52 & 0 & 589 & 481 & 0 & 61 & {75} & 0 & 2599 & 0 & 0 \\
\rowcolor{yellow!10}\predicateDiffOursNative & \textbf{1400} & 985 & 0 & \textbf{156} & 520 & 0 & \textbf{125} & \textbf{75} & 0 & \textbf{594} & \textbf{488} & 0 & \textbf{115} & {75} & 0 & \textbf{2663} & 0 & 0\\
\midrule
\predicateDiffSyn & \textbf{1394} & \textbf{975} & 0 & 48 & 474 & 0 & 10 &  \textbf{95} & 0 & \textbf{642} & \textbf{468} & 0 & 81 & 45 & 0 & 2639 & 0 & 0  \\
\rowcolor{yellow!10}\predicateDiffOurs & 1390 & 955 & 0 & \textbf{156} & \textbf{572} & 0 & \textbf{125} & 50 & 0 & 639 & 456 & 0 & \textbf{135} & \textbf{60} & 0 & \textbf{2685} & 0 & 0\\
\midrule
\cpacheckerDiffSyn & 1000 & \textbf{1282} & 0 & 48 & 469 & 0 & 41 & 140 & 0 & {480} & \textbf{663} & 0 & 61 & 45 & 0 & \textbf{3906} & 0 & 0  \\
\rowcolor{yellow!10}\cpacheckerDiffOurs & \textbf{1119} & 1252& 0 & \textbf{156} & \textbf{520} & 0 & \textbf{671} & \textbf{500} & 0 & {480} & 638 & 0 & \textbf{117} & \textbf{60} & 0 & 3618 & 0 &0 \\
\midrule
\esbmcDiffSyn & 0 & 0 & 0 & 333 & 572 & 0 & 0 & 0 & 0 & 0 & 78 & 0 & 105 & 75 & 0 & 0 & 0 & 0\\
\rowcolor{yellow!10}\esbmcDiffOurs & 0 & \textbf{900} & 0 & \textbf{880} & 572 & 0 & 0 & \textbf{70} & 10 & \textbf{101} & \textbf{618} & 0 & \textbf{156} & 75 & 0 & 0 & 0 & 0 \\ 
\bottomrule
\end{tabular}
}
\end{table}

\textbf{RQ~3 Does Difference Verification with Our Extractor Improve over Difference Verification with the Syntax-Based Extractor~\diffCond?}
Based on the results of the previous research question, we can expect a performance difference when exchanging the syntax-based difference condition extractor with our extractor. However, we do not yet know whether the difference results in an improvement with respect to effectiveness or efficiency. 
Therefore, we investigate these two aspects in this research question.

Again, we start with studying the effectiveness.
For all difference verifiers considered in our experiments, Tab.~\ref{tab:effectivityNumbers-comp-diffV} shows the number of correctly proved~(\cproof) and disproved~(\calarm) tasks as well as the number of incorrectly solved tasks (\iresult).
The structure of the table is similar to the one of Tab.~\ref{tab:effectivityNumbers-comp-verify}. The only difference is that we exchanged the rows for the complete verification by the results for the difference verifiers using the syntax-based difference condition extractor. 
Now, let us compare the results of those difference verifiers that only differ in their difference condition extractor ({\diffSynRed} vs.\ {\diffOursRed} or {\diffSynNat} vs.\ \diffOursNat).
When comparing the difference verifiers using the native approach for difference verification with \predicate (first two rows), difference verification with our difference condition extractor (\predicateDiffOursNative) mostly solves more tasks correctly than difference verification with the syntax-based extractor~\diffCond (\predicateDiffSynNative).
A more detailed study revealed that using our difference condition extractor, which computes a more precise overapproximation of the paths with regression bugs, enables the difference verifier to solve additional tasks 
and only for the regression tasks there exist several hundred tasks solvable by \predicateDiffSynNative for which \predicateDiffOursNative fails due to timeout in the extractor.
When using the reducer-based approach for \predicate, 
results are more diverse.
For some categories, difference verification with our extractor~(\predicateDiffOurs) detects significantly fewer alarms (\calarm) and sometimes also a slightly smaller number of proofs (\cproof), which \predicateDiffSyn detects close to timeout.
Since most of these alarms are detected by \predicateDiffOursNative, we think the structure of the residual program  is responsible for the decrease in performance and not our extractor.
More importantly, there exist categories like \texttt{gcd+newton} and \texttt{square+softflt} in which our extractor enables the difference verifier to solve all tasks solved by \predicateDiffSyn plus some additional tasks.
Also, in case of the verifier~\cpachecker, the difference verifier using our extractor (\cpacheckerDiffOurs) typically detects additional proofs,  but sometimes detects fewer alarms than the difference verifier  \cpacheckerDiffSyn based on the syntax-based extractor.
While the decrease in category~\texttt{regression} is caused by the timeout of our extractor, a decrease of detected alarms occurs only in categories for which \predicateDiffOurs also performs worse, and is, thus, likely caused by the residual program.
For \esbmc, difference verification with our extractor (\esbmcDiffOurs) always improves over \esbmcDiffSyn.
Typically, \esbmcDiffOurs solves the same tasks as \esbmcDiffSyn as well as some additional tasks.
Note that both \esbmcDiffSyn and \esbmcDiffOurs do not solve any regression task because the entry function for analysis is not main, and, therefore not supported by \esbmc.

\begin{figure}[t]
	\centering
		\includegraphics[width=0.3\textwidth]{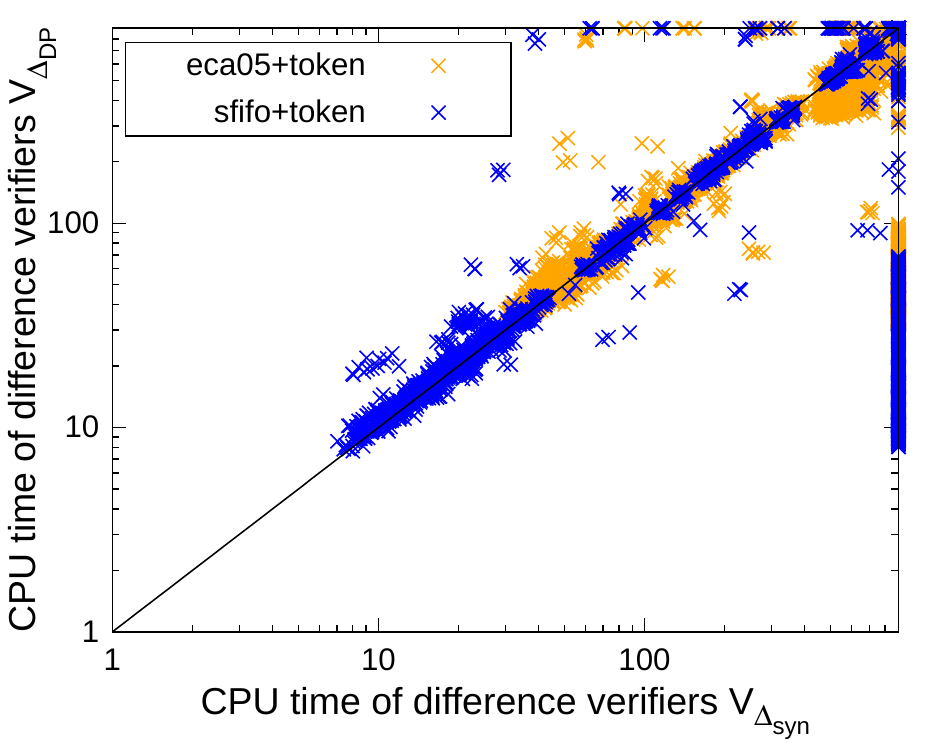}
		\hfill
		\includegraphics[width=0.3\textwidth]{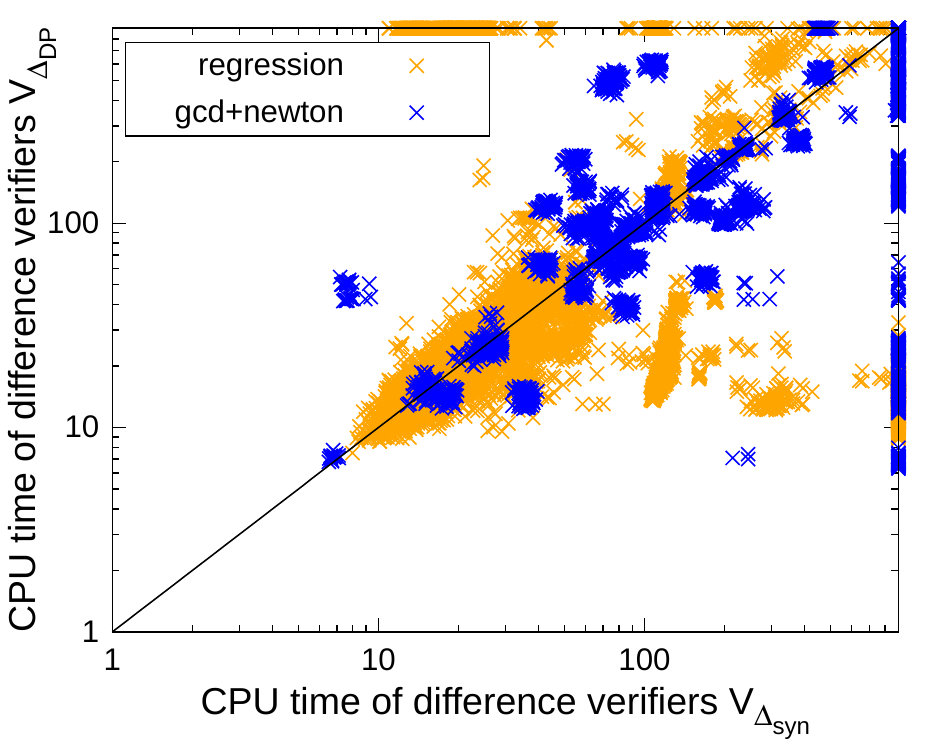}
		\hfill
		\includegraphics[width=0.3\textwidth]{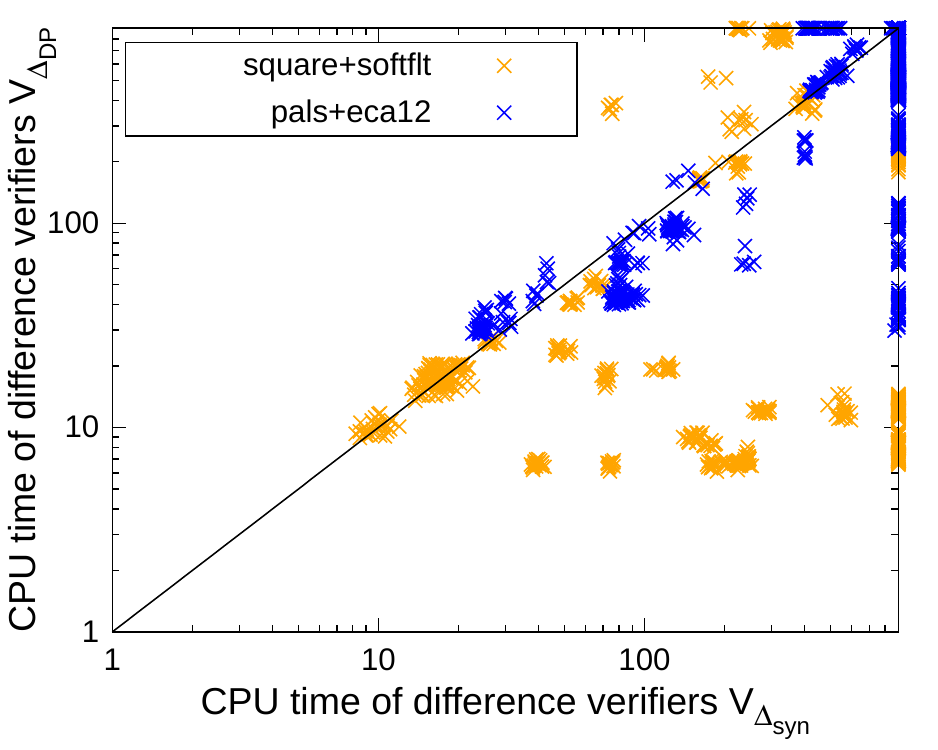}
		\vspace{-0.5em}
	\caption{Comparing CPU times of difference verifiers with syntax-based condition extractor (x-axis) and with our condition extractor (y-axis)}
	\label{fig:compareDiffV}
	\vspace{-1em}
\end{figure}

Next, we look at the efficiency in terms of CPU time. 
For each category (column in Tab.~\ref{tab:effectivityNumbers-comp-diffV}), we compare the efficiency of all pairs \((V^{\diffSynUn}, V^{\diffOursUn})\) of difference verifiers that use the same conditional verifier, but vary in the difference condition extractor, the scatter plots 
Figure~\ref{fig:compareDiffV} shows scatter plots that compare  the CPU times \(V^{\diffSynUn}\) (x-axis) and \(V^{\diffOursUn}\) (y-axis) needed to solve a task. 
Again, we use the maximum CPU time of \num{900}\,s if a task is not solved.
For \esbmc, we also exclude the results of the \num{10} incorrectly tasks.
Once again, the scatter plot does not consider CPU times of incorrectly solved tasks.
Studying the three scatter plots, we notice that for each category there exists data points in the lower right half, i.e., the difference verifier~\(V^{\diffOursUn}\) with our extractor is faster.
For categories \texttt{eca05+token} and \texttt{sfifo+token} (left scatter plot), the extra effort of our extractor rarely pays off.  
Computing a more precise condition does not pay off and often causes a slow down of the difference verifier because all paths with syntactical changes are already easy to verify. 
For categories \texttt{regression} and \texttt{gcd+newton} (middle scatter plot), the extra effort sometimes pay off. 
While for category \texttt{gcd+newton} in particular native \predicate and \esbmc profit from our more precise extractor, for \texttt{regression} all verifiers benefit for some tasks and for the categories~\texttt{square+softflt} and \texttt{pals+eca12} (right scatter plot) all verifiers often benefit from our extractor. 

\llbox{
We summarize that difference verification with our extractor is not always better than difference verification with the syntax-based difference condition extractor, but our extractor is more effective and efficient for several tasks.
}

\subsection{Threats to Validity}
\textbf{Threats to Internal Validity.}
Implementation bugs in one of the extractors, the residual program generator, or the verifiers may threaten the validity of our results.
We think that implementation bugs are unlikely and have at most little impact on our results.
On the one hand, we would expect that bugs in extractors or residual program generator result in (many) false results, in particular false proofs.
However, only \esbmc reported false results for ten tasks, for which another verifier succeeds.
On the other hand, the verifiers themselves participate in SV-COMP, therefore aiming to reduce the number of false results.
Furthermore, the verifiers may have benefited from the (structure of the) residual program instead of the condition extractors.
We think this is unlikely because we also observe positive effects when using a native conditional verifier, which does not use a residual program.
In addition, our experience from the past and this paper is that residual programs are often more difficult to verify.

\textbf{Threats to External Validity.}
Our results may not generalize to arbitrary programs.
Our combination tasks were artificially constructed by Beyer et al.~\cite{DiffCond} and are the result of the combination of two tasks, i.e., have a particular structure and differences may only occur in one of the combined tasks.
While the the regression tasks stem from real world tasks they only focus on the particular domain of Linux device drivers.
Also, our observations may not carry over to different verifiers and extractors.
Nevertheless, we confirmed the finding of Beyer et al.~\cite{DiffCond} for our new extractor.

\section{Related Work}
A broad range of approaches exist that aim to protect modified programs against regression bugs.
One line of work, deals with regression testing~\cite{RegressionTest}.
Other approaches (dis)prove behavior preservation, e.g.,~\cite{eXpress,BERT,PEQtest,RV-DAC,PEQcheck}, 
Also, there exist a variety of incremental verification approaches.
Like us, several of the incremental approaches skip the analysis of unchanged behavior~\cite{RegressionMC,DiSE,iDiSE,Memoise,LeinoW15,iCoq,RPS,DiffCond}.
Some incremental verification approaches~adapt previous analysis results including fixed points~\cite{Reviser,IncA,DBLP:conf/scam/PlasSER20} or the explored state space~\cite{ExtremeMC,ISSE,EvolCheck} while others reuse intermediate results like constraint solutions~\cite{greenModelChecking,Recal}, abstraction details~\cite{PrecisionReuse,TraceAbstractionReuse,P2V}, or annotations~\cite{SummaryReuse,AnnotationReuse}.

Next, let us have a closer look of what unchanged parts of a program are skipped in the different incremental verification approaches.
iCoq~\cite{iCoq} addresses proofs in the theorem prover Coq and only reverifies Coq~proofs that are affected by the modification.
The deductive verifier Dafny\cite{LeinoW15} only reverifies methods affected by the modification and reuses unaffected verification conditions obtained in previous verification runs.
In contrast, regression property selection~(RPS)~\cite{RPS} restricts runtime verification to properties that are related to classes that are affected by the modification.
Furthermore, DiSE~\cite{DiSE,iDiSE} is an incremental technique for symbolic execution that directs symbolic execution towards program statements that are affected by the modification.
There also exist approaches that stop exploration as soon as a path cannot reach a change.
For instance, RMC~\cite{RegressionMC} stops DFS exploration in this case, while Memoise~\cite{Memoise} stops symbolic execution of states that cannot reach any modification and disables constraint solving on unchanged paths.
Originally, difference verification with condition~\cite{DiffCond} stops if a path cannot reach a syntactic difference.
We extend the framework of difference verification with conditions with a new difference condition extractor that stops if no modification can be reached or we can show based on data- and control-flow that the modifications seen cannot affect the property of interest, and, thus, cannot cause a regression bug.

After we compared what is skipped, we now look into which information is used to determine what is affected.
Dafny~\cite{LeinoW15} relies on checksums and iCoq~\cite{iCoq} relies on timestamps and checksums to determine affected components. 
In contrast, the syntax-based difference condition extractor from~\cite{DiffCond}, Memoise~\cite{Memoise}, and RMC~\cite{RegressionMC} treat all syntactically changed paths as affected. 
Going beyond pure syntactic differences, RPS~\cite{RPS} uses class dependencies to compute affected classes.
In contrast, our difference condition extractor mainly relies on data dependencies.
Similarly, DiSE~\cite{DiSE,iDiSE} considers control and data dependencies to detect change affected program statements and Jana et al.~\cite{DBLP:conf/issre/JanaKCKG021} use them to detect which properties are affected by the modification at which locations.
Finally, we want to state that there exist approaches that use abstract interpretation~\cite{CondEQ,Dizy} or symbolic execution~\cite{DiffSymExe,PartitionBasedRegressionVerification,ModDiff,ShadowSymbolicExecution} to determine the inputs and, thus, the paths for which the behavior of original and modified programs differ functionally.
So far, they have not been used to skip unaffected paths in reverification.
\section{Conclusion}
Every software modification may introduce new bugs and, thus, threaten software reliability.
When software verification is used to establish software reliability, the software verifiers need to keep up with the frequent software changes.
Verifying the complete modified program after each change is typically too costly.
Therefore, we require incremental verifiers that rely on the insight that changes do not introduce regression bugs everywhere in the software.

One incremental and also verifier agnostic approach is difference verification with conditions.
It uses the idea of conditional model checking to restrict the verification to those program parts which may contain new bugs introduced by the change.
To this end, it precedes the conditional model checker with a difference condition extractor that performs a change analysis to detect the program paths relevant for reverification.
Like the verifier, also the difference condition extractor can be exchanged.
While the proof-of-concept implementation, uses a syntax-based difference condition extractor, we suggest a more complex difference condition extractor that also considers data dependencies and program properties during extraction.
We prove its soundness and experimentally show for various verifiers that this new difference detector improves the effectiveness and efficiency of difference verification with condition on some tasks.


%
 \bibliographystyle{splncs04}
 \bibliography{literature}
\clearpage
\appendix
\section{Proving Soundness of \diffDetectOurs}
In the following, we write \(paths(P)\) for the set of executable program paths of program~\(P=(L,\ell_0,G, \ell_\mathrm{err})\) and \(c=_{|_{V^-}}c'\) to denote that concrete data states \(c, c'\) map all variables except for the variables from the set \(V\) to the same value.

\subsection{Helper Lemmas for Proving Theorem~2}
We start with a set of lemmas that simplify the proof of Theorem~2.
First, we show that each prefix of any path with a regression bug, which is a path of the modified program, can be related to at most one path in the difference graph produced by our difference detector~\diffDetectOurs. 

\begin{lemma}\label{lem:deterministic}
Let \(P=(L,\ell_0,G, \ell_\mathrm{err})\) and \(P'=(L',\ell'_0,G', \ell'_\mathrm{err})\) be two deterministic programs.
For all \(\pi' = (\ell'_0,c'_0) \stackrel{g'_1}{\rightarrow}(\ell'_1, c'_1) \stackrel{g'_2}{\rightarrow} \ldots \stackrel{g'_n}{\rightarrow} (\ell'_n, c'_n)\in paths^\mathrm{rb}(P,P')\) and \(0{\leq}k{\leq}n\) if \(n_0\stackrel{g'_1}{\rightarrow}n_1 \stackrel{g'_2}{\rightarrow} \ldots \stackrel{g'_k}{\rightarrow} n_k\in \textnormal{\diffDetectOurs}(P,P')\) and \(n'_0\stackrel{g'_1}{\rightarrow}n'_1 \stackrel{g'_2}{\rightarrow} \ldots \stackrel{g'_k}{\rightarrow} n'_k\in \textnormal{\diffDetectOurs}(P,P')\) such that \(n'_0=n_0\), then \(\forall 0{\leq}i{\leq}k: n_i=n'_i\).
\end{lemma}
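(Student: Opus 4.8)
The plan is to reduce the statement to a single \emph{determinism} property of the graph built by \diffDetectOurs\ and then close by a one-line induction. Concretely, I would first isolate the claim that $DG=\textnormal{\diffDetectOurs}(P,P')$ is a deterministic labelled graph: for every node $n$ and every label $g'\in G'$ there is at most one $n'$ with $(n,g',n')\in E$. Granting this, the lemma is immediate by induction on $i\in\{0,\dots,k\}$. The base case is the hypothesis $n_0=n'_0$. For the step, assume $n_i=n'_i$; since both $n_i\stackrel{g'_{i+1}}{\rightarrow}n_{i+1}$ and $n'_i=n_i\stackrel{g'_{i+1}}{\rightarrow}n'_{i+1}$ are edges of $DG$ that share the source $n_i$ and the label $g'_{i+1}$, determinism forces $n_{i+1}=n'_{i+1}$. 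So the whole content sits in the single-step claim, and the role of the hypothesis $\pi'\in paths^\mathrm{rb}(P,P')$ is only to guarantee that each $g'_i$ is an executable edge of $P'$, in particular that every assume operation $op'_i$ along the prefix is satisfiable (a concrete state satisfies it); this matches the guard $op'\not\equiv\mathit{false}$ used in the matching test of Alg.~\ref{alg:diffcompprop}.

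Next I would prove the single-step claim by inspecting how \diffDetectOurs\ chooses a successor. The modified-program component of any proposed successor is pinned by $g'=(\ell'_1,op',\ell'_2)$: it is always $\ell'_2$, whether the successor is an aligned pair $(\cdot,\ell'_2)$ or the indicator node $\ell'_2\in\bad$. Hence only the original-program component can be ambiguous, and here I would use that $P$ is deterministic. For an assignment $op'$, determinism of $P$ yields at most one $(\ell_1,op',\ell_2)\in G$, so the aligned successor $(\ell_2,\ell'_2)$ (or, if no such edge exists, the stationary $(\ell_1,\ell'_2)$) is unique. For an assume $op'$, the resynchronization \textbf{while} loop walks the \emph{unique} maximal chain of assignment edges out of $\ell_1$ — unique because, by determinism, a location possessing an outgoing assignment has no other outgoing edge — so the landing location $\ell_p$ is determined by $\ell_1$ alone. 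The matched assume $(\ell_p,op,\ell_2)\in G$, if any, is then unique as well: the outgoing assumes of $\ell_p$ are complementary ($op$ and $\neg op$), and a satisfiable $op'$ can neither equal nor imply both, so the matching test selects at most one of them. Thus the original-program component of an aligned successor is a function of $\ell_1$ and $g'$ only.

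The remaining difficulty, which I expect to be the main obstacle, is that a pair $(\ell_1,\ell'_1)$ may be popped from \waitlist\ more than once (whenever $\modified((\ell_1,\ell'_1))$ grows), so \textsc{process}$(n,g',\cdot)$ can run repeatedly with different $V_\mathrm{diff}$, and one must verify that $\edges$ and the derived $E$ stay deterministic. I would argue that every branch taken in Alg.~\ref{alg:diffcompprop} depends only on $\ell_1$, $op'$ and $G$ — not on $V_\mathrm{diff}$ — \emph{except} the executability test $V_\mathrm{diff}\cap(\rd(op)\cup\rd(op'))\neq\emptyset$ inside a matching assume, which may flip the proposed successor between the aligned pair $(\ell_2,\ell'_2)$ and the indicator node $\ell'_2\in\bad$. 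The key observation is that the final $E$ collapses this flip: as soon as $\ell'_2$ is added to $\bad$, \textsc{process} removes every $(\cdot,\ell'_2)$ from \visited, so the aligned pair $(\ell_2,\ell'_2)$ ceases to be a node; and since the construction of $E$ keeps only targets in $\visited\cup\bad$, the stale edge $(n,g',(\ell_2,\ell'_2))\in\edges$ is redirected to $\ell'_2$ through the second disjunct of the definition of $E$, which coincides with the edge produced by the bad-node branch. Consequently, regardless of which explorations occur and in which order, each $(n,g')$ has exactly one successor in $E$, establishing the claim and hence the lemma.
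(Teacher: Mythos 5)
Your proof is correct and follows essentially the same route as the paper's: an induction on the prefix length whose step case rests on determinism of the graph built by Alg.~\ref{alg:diffcompprop}, combined with the observation that the construction of \(E\) (removal of aligned pairs \((\cdot,\ell')\) from \(\visited\) once \(\ell'\in\bad\), plus the redirecting disjunct in line~20) leaves at most one successor per pair \((n,g')\). The only difference is one of detail: you spell out the case analysis (unique assignment edge, unique resynchronization chain, at most one match among two complementary assumes for a satisfiable \(op'\)) that the paper compresses into the single assertion that determinism of \(P\) and \(P'\) makes all edges added to \(\edges\) deterministic.
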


\begin{proof}
Let \(P=(L,\ell_0,G, \ell_\mathrm{err})\) and \(P'=(L',\ell'_0,G', \ell'_\mathrm{err})\) be arbitrary deterministic programs, \(\pi' = (\ell'_0,c'_0) \stackrel{g'_1}{\rightarrow}(\ell'_1, c'_1) \stackrel{g'_2}{\rightarrow} \ldots \stackrel{g'_n}{\rightarrow} (\ell'_n, c'_n)\in paths^\mathrm{rb}(P,P')\) be arbitrary and \diffDetectOurs\((P,P')=(N,E,n_0,\bad)\).
Choose arbitrary \(k\in\mathbb{N}\) with \(0{\leq}k{\leq}n\) and let
\(n_0\stackrel{g'_1}{\rightarrow}n_1 \stackrel{g'_2}{\rightarrow} \ldots \stackrel{g'_k}{\rightarrow} n_k\in \textnormal{\diffDetectOurs}(P,P')\) and \( n'_0\stackrel{g'_1}{\rightarrow}n'_1 \stackrel{g'_2}{\rightarrow} \ldots \stackrel{g'_k}{\rightarrow} n'_k\in \textnormal{\diffDetectOurs}(P,P')\) be arbitrary, but both starting at the root node, i.e., \(n'_0=n_0\).
Proof by induction that \(\forall 0{\leq}i{\leq}k: n_i=n'_i\).
\begin{description}
  \item[Base case ($i=0$)] Follows from assumption.
  \item[Step case ($i-1\rightarrow i, i\leq k$)]
	By induction hypothesis,  \(\forall 0{\leq}j{\leq}i-1: n_j=n'_j\).
	We need to show that \(n_i=n'_i\).
  By construction \(N=\visited\cup\bad\), for each \(\ell'\in L'\) not exists \((\cdot,\ell')\in\visited\) or  \(\ell'\notin\bad\), and \(E\subseteq \visited\times G'\times (\visited\cup\bad)\).
  Also, \(\forall 1{\leq}i{\leq}k: g'_i=(\ell'_{i-1},\cdot, \ell'_i)\Rightarrow n_{i-1}=(\cdot,\ell'_{i-1})\wedge n'_{i-1}=(\cdot,\ell'_{i-1}) \wedge (n_{i}=(\cdot,\ell'_{i})\wedge n'_{i}=(\cdot,\ell'_{i}) \vee n_{i}=\ell'_{i}=n'_{i}\wedge i=k)\).
	Let \(g'_i=(\ell'_{i-1},op'_i, \ell'_i)\).
	Hence, \(n_i=\ell'_i=n'_i\) or there exists \(\ell, \ell_1,\ell_2\in L\) such that \(n_{i-1}=n'_{i-1}=(\ell, \ell'_{i-1})\), \(n_i=(\ell_1, \ell'_{i})\), \(n'_{i}=(\ell_2, \ell'_{i})\), \((n_{i-1}, (\ell'_{i-1}, op'_i, \ell'_i), (\ell_1, \ell'_i))\in E\), and \((n'_{i-1}, (\ell'_{i-1}, op'_i, \ell'_i), (\ell_2, \ell'_i))\in E\).
	Since \(P, P'\) are deterministic, all computed edges \(((\ell_p, \ell'_p),g',(\ell_s,\ell'_s))\) that Alg.~\ref{alg:diffcompprop} adds to \(\edges\) are deterministic.
	Due to construction of \(E\), we know that \(((\ell, \ell'_{i-1}), (\ell'_{i-1}, op_i, \ell'_i), (\ell_1, \ell'_i))\in E\) and  \(((\ell', \ell'_{i-1}), (\ell'_{i-1}, op_i, \ell'_i), (\ell_2, \ell'_i))\in E\) if \(((\ell, \ell'_{i-1}), (\ell'_{i-1}, op_i, \ell'_i), (\ell_1, \ell'_i))\in \edges\), \(((\ell', \ell'_{i-1}), (\ell'_{i-1}, op_i, \ell'_i), (\ell_2, \ell'_i))\in\edges\).
	We conclude \(\ell_1=\ell_2\).
	Thus, \(n_i=n'_i\).
 \end{description}
\qed
\end{proof}

Next, we show that the difference detector \diffDetectOurs properly extends any alignment constructed for a real prefix of a path with regression bug if it does not end in a regression bug indicator node.

\begin{lemma}\label{lem:extend}
Let \(P=(L,\ell_0,G, \ell_\mathrm{err})\) and \(P'=(L',\ell'_0,G', \ell'_\mathrm{err})\) be arbitrary deterministic programs, \diffDetectOurs\((P,P')=(N,E,n_0,\bad)\), and \(\modified\) in the state as at the end of \diffDetectOurs\((P,P')\). 
For all \(\pi' = (\ell'_0,c'_0) \stackrel{g'_1}{\rightarrow}(\ell'_1, c'_1) \stackrel{g'_2}{\rightarrow} \ldots \stackrel{g'_n}{\rightarrow} (\ell'_n, c'_n)\in paths^\mathrm{rb}(P,P')\) and \(0{\leq}k{<}n\) if \(n_0\stackrel{g'_1}{\rightarrow}n_1 \stackrel{g'_2}{\rightarrow} \ldots \stackrel{g'_k}{\rightarrow} n_k\in \textnormal{\diffDetectOurs}(P,P')\) and there exists \((\ell_0,c_0) \stackrel{g_1}{\rightarrow}(\ell_1, c_1) \stackrel{g_2}{\rightarrow} \ldots \stackrel{g_m}{\rightarrow} (\ell_m, c_m)\in paths(P)\) with \(c_0=c'_0\), and there exists total, monotonously increasing function \(f:\{0,\dots,k\}\rightarrow\{0,\dots,m\}\) with \(f(k)=m\) and \(\forall 0{\leq}i{\leq}k: n_i=(\ell_{f(i)},\ell'_i)\wedge c'_i=_{|_{\modified(n_i)^-}}c_{f(i)}\), then there exists \((n_k, g'_{k+1},n_{k+1})\in E\) with \(n_{k+1}\in\Delta\) or there exist \((\ell_0,c_0) \stackrel{g_1}{\rightarrow}(\ell_1, c_1) \stackrel{g_2}{\rightarrow} \ldots \stackrel{g_m}{\rightarrow} (\ell_m, c_m)\ldots \stackrel{g_{m+t}}{\rightarrow} (\ell_{m+t}, c_{m+t})\in paths(P)\) and total, monotonously increasing function \(f':\{0,\dots,k+1\}\rightarrow\{0,\dots,m+t\}\) with \(f'(k+1)={m+t}\) and \(\forall 0{\leq}i{\leq}k+1: n_i=(\ell_{f'(i)},\ell'_i)\wedge c'_i=_{|_{\modified(n_i)^-}}c_{f'(i)}\).
\end{lemma}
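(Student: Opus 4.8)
The plan is to prove this single-step extension lemma directly, by a case analysis that mirrors the branch of Algorithm~\ref{alg:diffcompprop} (and of \textsc{process}) that fires when $n_k$ processes the edge $g'_{k+1}=(\ell'_k,op',\ell'_{k+1})$; no further induction is needed, since the induction that uses this lemma lives in the proof of Theorem~2. First I would fix notation: the structural hypothesis $n_i=(\ell_{f(i)},\ell'_i)$ makes $n_k=(\ell_m,\ell'_k)$ a pair, hence $n_k\in\visited$ and $n_k\notin\bad$. I would then observe that, since \textsc{process} re-adds a node to \waitlist\ exactly when its $\modified$-set grows and there are only finitely many variables, $n_k$ is eventually popped with the \emph{final} value of $\modified(n_k)$; the edges and the value $\modified(n_{k+1})$ recorded during that last exploration are the ones I reason about. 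Finally I would record the two facts carried through every case: executability of the $(k{+}1)$-th step of $\pi'$ ($c'_{k+1}=SP_{op'}(c'_k)$ for an assignment, $c'_k\models op'$ and $c'_{k+1}=c'_k$ for an assume), and the incoming invariant $c'_k=_{|_{\modified(n_k)^-}}c_m$.

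Before the operation split I would isolate which disjunct applies by testing whether $\ell'_{k+1}\in\bad$ in the returned graph. If $\ell'_{k+1}\in\bad$, then by the construction of $E$ in line~19 ---in particular its second clause, which relabels a pair successor to the $\bad$-node once its $P'$-location has been flagged--- the edge $(n_k,g'_{k+1},\ell'_{k+1})\in E$ lands in $\bad=\Delta$, and the first disjunct holds at once. This also absorbs the test of line~16 and the ``only the modified path reaches the error'' routing inside \textsc{process}, where $succ\in(L\setminus\{\ell_\mathrm{err}\})\times\{\ell'_\mathrm{err}\}$ is forwarded to $\ell'_\mathrm{err}$: both send \textsc{process} to a single location of $L'$, so $\ell'_{k+1}\in\bad$. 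The remaining work is the complementary case $\ell'_{k+1}\notin\bad$, where the pair successor recorded by \textsc{process} survives in $\visited$ and I must exhibit the witness extension together with $f'$. Throughout this case it suffices to prove agreement modulo the $V_\mathrm{diff}$ used at that last exploration, because $\modified(n_{k+1})\supseteq V_\mathrm{diff}$ and agreement modulo a larger excluded set is weaker.

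For $op'$ an assignment I would extend the witness by at most one step. With an identical edge $(\ell_m,op',\ell_2)\in G$ I take that (always executable) step, set $t=1$ and $f'(k+1)=m+1$, and verify $c'_{k+1}=_{|_{V_\mathrm{diff}^-}}c_{m+1}$ along the two cases of lines~7--9: when $\modified(n_k)\cap\rd(op')=\emptyset$ the states agree on $\rd(op')$, so the written variables agree by the defining property of \rd, leaving disagreement only on $\modified(n_k)\setminus\writ(op')$; otherwise the written variables may also disagree, giving $\modified(n_k)\cup\writ(op')$. Without an identical edge the algorithm does not advance in $P$, so $t=0$, $f'(k+1)=m$, and since an assignment perturbs $c'_k$ only on $\writ(op')$ the agreement holds modulo $\modified(n_k)\cup\writ(op')$. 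Both formulas match the $V_\mathrm{diff}$ chosen by the algorithm.

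For $op'$ an assume I would first replay the resynchronization loop (lines~12--14) as an extension of the witness by the traversed assignment edges of $P$; these are executable, and the written variables added to $V_\mathrm{diff}$ are exactly what keeps $c'_k=_{|_{V_\mathrm{diff}^-}}c_{m+r}$ valid, as the assume leaves $c'_k$ fixed. If the loop reaches $\ell_\mathrm{err}$ or no matching assume is found (line~18), the witness stops at $\ell_p$ with $t=r$ and agreement is immediate. The decisive branch is a matching assume $(\ell_p,op,\ell_2)$ with $V_\mathrm{diff}\cap(\rd(op)\cup\rd(op'))=\emptyset$, where I extend the witness by the assume ($t=r+1$): its executability---the hardest point---follows because $V_\mathrm{diff}\cap\rd(op')=\emptyset$ forces $c_{m+r}$ and $c'_k$ to agree on $\rd(op')$, hence $c_{m+r}\models op'$ by the \rd-property, and then $op'\equiv op$ or $op'\Rightarrow op$ gives $c_{m+r}\models op$; the complementary test of line~16 has already been folded into the first disjunct. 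I expect this transfer of executability across a possibly \emph{more general} matching assume, together with the bookkeeping that the surviving edge was the one recorded at the final-$\modified$ exploration of $n_k$, to be the main obstacle; the residual new-error sub cases reduce to the pair-versus-$\bad$ dichotomy and are routine.
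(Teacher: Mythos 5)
Your plan follows essentially the same route as the paper's proof: a single-step case analysis mirroring the branches of Alg.~\ref{alg:diffcompprop} (assignment with or without an identical edge in \(G\); assume operation after replaying the resynchronization loop, split by whether a matching assume exists and whether its read variables meet \(V_\mathrm{diff}\)), with the same two workhorses---transfer of executability via the \(\rd\)-property when \(V_\mathrm{diff}\cap(\rd(op)\cup\rd(op'))=\emptyset\), and the fact that \(\bad\), \(\edges\), and \(\modified\) only grow, so what is recorded while \(n_k\) is explored survives into the returned graph. Your two reorganizations are sound and arguably cleaner than the paper's: discharging the \(\Delta\)-disjunct once, up front, via the dichotomy \(\ell'_{k+1}\in\bad\) versus \(\ell'_{k+1}\notin\bad\) (using the relabeling clause in the definition of \(E\)) replaces the (a)/(b)/(c) trichotomy that the paper repeats inside every single case, and merging the identical-assume and more-general-assume subcases avoids the paper's near-verbatim duplication of its cases 2.1 and 2.2.

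There is, however, one genuine gap, and it sits under everything else: your justification that \(n_k\) ``is eventually popped with the final value of \(\modified(n_k)\)''. You argue this from ``\textsc{process} re-adds a node to \waitlist\ exactly when its \(\modified\)-set grows'', but that is not what the guard on the \waitlist\ insertion (lines~28--29) says: a successor is enqueued only if, in addition, \(succ\notin L\times\Delta\) \emph{and} \(succ\notin\{\ell_\mathrm{err}\}\times L'\). Nodes aligned with the error location of the \emph{original} program are put into \(\visited\) but never into \waitlist, hence never explored; if \(n_k\) were such a node, the ``last exploration of \(n_k\)'' that all your bookkeeping refers to would not exist, no outgoing edge of \(n_k\) would ever be added to \(\edges\), and neither disjunct of the lemma could be derived. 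Closing this requires the observation the paper makes in the first lines of its proof: because \(\pi'\in paths^\mathrm{rb}(P,P')\) and the witness path starts in \(c_0=c'_0\), the witness can never visit \(\ell_\mathrm{err}\) (otherwise a prefix of it would lie in \(paths^\mathrm{err}(P)\) with the same input, contradicting the regression-bug property), hence \(\ell_{f(k)}=\ell_m\neq\ell_\mathrm{err}\) and the exclusion does not apply to \(n_k\). This is a one-sentence patch, but a necessary one---and it is essentially the only point in the whole proof where the hypothesis \(\pi'\in paths^\mathrm{rb}(P,P')\), rather than mere executability of the witness, is actually used.
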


\begin{proof}
Let \(P=(L,\ell_0,G, \ell_\mathrm{err})\) and \(P'=(L',\ell'_0,G', \ell'_\mathrm{err})\) be arbitrary deterministic programs, \diffDetectOurs\((P,P')=(N,E,n_0,\bad)\), and \(\modified\) in the state as at the end of \diffDetectOurs\((P,P')\).
Let \(\pi' = (\ell'_0,c'_0) \stackrel{g'_1}{\rightarrow}(\ell'_1, c'_1) \stackrel{g'_2}{\rightarrow} \ldots \stackrel{g'_n}{\rightarrow} (\ell'_n, c'_n)\in paths^\mathrm{rb}(P,P')\), \(k\in\mathbb{N}\) such that \(0{\leq}k{<}n\) and \(n_0\stackrel{g'_1}{\rightarrow}n_1 \stackrel{g'_2}{\rightarrow} \ldots \stackrel{g'_k}{\rightarrow} n_k\in \textnormal{\diffDetectOurs}(P,P')\) be arbitrary such that there exists \(\pi=(\ell_0,c_0) \stackrel{g_1}{\rightarrow}(\ell_1, c_1) \stackrel{g_2}{\rightarrow} \ldots \stackrel{g_m}{\rightarrow} (\ell_m, c_m)\in paths(P)\) with \(c_0=c'_0\) and there exists total,  monotonously increasing function~\(f:\{0,\dots,k\}\rightarrow\{0,\dots,m\}\) with \(f(k)=m\) and \(\forall 0{\leq}i{\leq}k: n_i=(\ell_{f(i)},\ell'_i)\wedge c'_i=_{|_{\modified(n_i)^-}}c_{f(i)}\).
In the following, we need to show that there exists \((n_k, g'_{k+1},n_{k+1})\in E\) s.t.\ \(n_{k+1}\in\Delta\) or there exists \(\pi_n=(\ell_0,c_0) \stackrel{g_1}{\rightarrow}(\ell_1, c_1) \stackrel{g_2}{\rightarrow} \ldots \stackrel{g_m}{\rightarrow} (\ell_m, c_m)\ldots \stackrel{g_{m+t}}{\rightarrow} (\ell_{m+t}, c_{m+t})\in paths(P)\) and total, monotonously increasing function~\(f':\{0,\dots,k+1\}\rightarrow\{0,\dots,m+t\}\) with \(f'(k+1)={m+t}\) and \(\forall 0{\leq}i{\leq}k+1: n_i=(\ell_{f'(i)},\ell'_i)\wedge c'_i=_{|_{\modified(n_i)^-}}c_{f'(i)}\).
Since \(\pi'\in paths^\mathrm{rb}(P,P')\), \(\pi\in paths(P)\), and \(c_0=c'_0\), we know \(\forall 0\leq i\leq m: \ell_i\neq\ell_\mathrm{err}\).
Hence, by construction for all \( 0\leq i\leq k, n_i\in (L\setminus\{\ell_\mathrm{err}\})\times (L'\setminus(\{\ell'_\mathrm{err}\}\cup\Delta))\) and Alg.~\ref{alg:diffcompprop} ensures that its while loop explores \(n_k\) with \(\modified(n_k)\).
Furthermore, \(n_k=(\ell_{f(k)},\ell_k)=(\ell_m, \ell_k)\). 
Due to semantics and \(k<n\), there exists \(g'_{k+1}=(\ell'_k, op'_{k+1}, \ell'_{k+1})\) for some \(op'_{k+1}\in Ops\).
We distinguish between the operation type of \(op'_{k+1}\).

\begin{description}
  \item[Case 1: $op'_{k+1}$ assignment]
Due to the semantics, \(c'_k=_{|_{\writ{(op'_{k+1})}^-}}c'_{k+1}\).
Now, consider two cases.

\begin{enumerate}
  \item Assume there exists \(\ell_{m+1}\in L\) with \(g_{k+1}=(\ell_m, op'_{k+1},\ell_{m+1})\in G\).
Due to semantics, there exists transition step \((\ell_m, c_m)\stackrel{g_{k+1}}{\rightarrow} (\ell_{m+1}, c_{m+1})\) and \(c_m=_{|_{\writ({op'_{k+1}})^-}}c_{m+1}\).
From \(c'_k=_{|_{{\modified((\ell_m,\ell'_k))}^-}}c_{m}\), we conclude that \(c'_{k+1}=_{|_{{(\modified((\ell_m,\ell'_k))\cup\writ({op'_{k+1}}))}^-}}c_{m+1}\).
Furthermore, we conclude that if \(\modified((\ell_m,\ell'_k))\cap\rd({op'_{k+1}})=\emptyset\), from the semantics it then follows that \(c'_{k+1}=_{|_{{(\modified((\ell_m,\ell'_k))\setminus\writ({op'_{k+1}}))}^-}}c_{m+1}\).
Let $V_\mathrm{diff}=\modified((\ell_m,\ell'_k))\setminus\writ(op'_{k+1})$ if \(\modified((\ell_m,\ell'_k))\cap\rd({op'_{k+1}})=\emptyset\) and $V_\mathrm{diff}=\modified((\ell_m,\ell'_k))\cup\writ(op'_{k+1})$ otherwise.
Then, we can derive \(c'_{k+1}=_{|_{^{{V_\mathrm{diff}}^-}}}c_{m+1}\).
Also, Alg.~\ref{alg:diffcompprop} either  (a)~adds\(((\ell_m, \ell'_k),g'_{k+1},\ell'_{k+1})\) to \(\edges\) and adds \(\ell'_{k+1}\) to \(\bad\), (b)~adds \(((\ell_m, \ell'_k), g'_{k+1},(\ell_{m+1}, \ell'_{k+1}))\) to \(\edges\) and already added \(\ell'_{k+1}\) to \(\bad\), or (c)~it adds \(((\ell_m, \ell'_k), g'_{k+1},(\ell_{m+1}, \ell'_{k+1}))\) to \(\edges\), \((\ell_{m+1}, \ell'_{k+1})\) to \(\visited\), and \(V_\mathrm{diff}\) to \(\modified((\ell_{m+1},\ell'_{k+1}))\).
Due to construction of \(E\) and never deleting from \(\bad,\edges, \modified\), there either exists \(((\ell_m, \ell'_k),g'_{k+1},\ell'_{k+1})\in E\) with \(\ell'_{k+1}\in\bad\) or \(((\ell_m, \ell'_k), g'_{k+1},(\ell_{m+1}, \ell'_{k+1}))\in E\) with \(V_\mathrm{diff}\subseteq \modified((\ell_{m+1},\ell'_{k+1}))\).
In the latter case, we set \(f'=f\cup\{(k+1, m+1)\}\) and \(\pi_n=(\ell_0,c_0) \stackrel{g_1}{\rightarrow}(\ell_1, c_1) \stackrel{g_2}{\rightarrow} \ldots \stackrel{g_m}{\rightarrow} (\ell_m, c_m)\stackrel{g_{m+1}}{\rightarrow} (\ell_{m+1}, c_{m+1})\).
Since \(V_\mathrm{diff}\subseteq \modified((\ell_{m+1},\ell'_{k+1}))\), also \(c'_{k+1}=_{|_{^{{\modified((\ell_{m+1},\ell'_{k+1}))}^-}}}c_{m+1}\).
From \(f\) being total and monotonously increasing with largest value \(m\), we conclude \(f'\) is total and monotonously increasing.
By construction of \(f'\), we get \(\forall 0{\leq}i{\leq}k+1: n_i=(\ell_{f'(i)},\ell'_i)\wedge c'_i=_{|_{\modified(n_i)^-}}c_{f'(i)}\).

\item Assume there does not exit \(\ell_{m+1}\in L\) s.t.\ \(g_{k+1}=(\ell_m, op'_{k+1},\ell_{m+1})\in G\).
From \(c'_k=_{|_{\modified((\ell_m,\ell'_k))^-}}c_{m}\) and  \(c'_k=_{|_{\writ{(op'_{k+1})}^-}}c'_{k+1}\), we get that \(c'_{k+1}=_{|_{^(\modified((\ell_m,\ell'_k))\cup\writ({op'_{k+1}}))^-}}c_{m}\).
Also, Alg.~\ref{alg:diffcompprop} either  (a)~adds \(\ell'_{k+1}\) to \(\bad\) and\(((\ell_m, \ell'_k),g'_{k+1},\ell'_{k+1})\) to \(\edges\), (b) adds \(((\ell_m, \ell'_k), g'_{k+1},(\ell_{m}, \ell'_{k+1}))\) to \(\edges\) and already added \(\ell'_{k+1}\) to \(\bad\), or (c)~it adds \(((\ell_m, \ell'_k), g'_{k+1},(\ell_{m}, \ell'_{k+1}))\) to \(\edges\), \((\ell_{m}, \ell'_{k+1})\) to \(\visited\), and \(V_\mathrm{diff}\) to \(\modified((\ell_{m},\ell'_{k+1}))\).
Due to construction of \(E\) and never deleting elements from \(\bad,\edges, \modified\),  there either exists \(((\ell_m, \ell'_k),g'_{k+1},\ell'_{k+1})\in E\) with \(\ell'_{k+1}\in\bad\) or \(((\ell_m, \ell'_k), g'_{k+1},(\ell_m, \ell'_{k+1}))\in E\) with \((\modified((\ell_m,\ell'_k))\cup\writ({op'_{k+1}}))\subseteq \modified((\ell_m,\ell'_{k+1}))\).
In the latter case, set \(\pi_n=\pi\) and \(f'=f\cup\{(k+1, m)\}\).
Then, we conclude from  \((\modified((\ell_m,\ell'_k))\cup\writ({op'_{k+1}}))\subseteq \modified((\ell_m,\ell'_{k+1}))\) that also \(c'_{k+1}=_{|_{^(\modified((\ell_m,\ell'_{k+1}))\cup\writ({op'_{k+1}}))^-}}c_{m}\).
From \(f\) being total and monotonously increasing with largest value \(m\), we infer \(f'\) is total and monotonously increasing.
By construction of \(f'\), we get \(\forall 0{\leq}i{\leq}k+1: n_i=(\ell_{f'(i)},\ell'_i)\wedge c'_i=_{|_{\modified(n_i)^-}}c_{f'(i)}\).
\end{enumerate}

\item[Case 2: $op'_{k+1}$ assume operation]
Let \(r\) be the number of iterations of the while loop in lines~13--14, in which that while loop explores assignments \((\ell^p_{m+1},op_{m+1}, \ell_{m+1})\),\dots,\((\ell^p_{m+r},op_{m+r}, \ell_{m+r})\) and computes change affected variables~\(V_\mathrm{diff}=\modified((\ell_m,\ell'_k))\bigcup_{i=1}^{r} \writ(op_{m+i})\).
By construction, we get \(\ell^p_{m+1}=\ell_m\) and \(\forall 2\leq i\leq r: \ell_{m+i-1}=\ell^p_{m+i}\).
Due to semantics, there exists \((\ell_m,c_m) \stackrel{g_{m+1}}{\rightarrow}(\ell_{m+1}, c_{m+1}) \stackrel{g_{m+2}}{\rightarrow} \ldots \stackrel{g_{m+r}}{\rightarrow}(\ell_{m+r}, c_{m+r})\) s.t.\ for all \mbox{\(1\leq i\leq r\)}, \(c_{m+i-1}=_{|_{{\bigcup_{j=1}^{i} \writ(op_{k+j})}^-}} c_{m+i}\).
Furthermore, \(\pi_i=(\ell_0,c_0) \stackrel{g_1}{\rightarrow}(\ell_1, c_1) \stackrel{g_2}{\rightarrow} \ldots \stackrel{g_m}{\rightarrow} (\ell_m, c_m)\stackrel{g_{m+1}}{\rightarrow}(\ell_{m+1}, c_{m+1}) \stackrel{g_{m+2}}{\rightarrow} \ldots \stackrel{g_{m+r}}{\rightarrow}(\ell_{m+r}, c_{m+r})\in paths(P)\) and after the while loop \(\ell_p=\ell_{m+r}\).
From \(c'_k=_{|_{{\modified((\ell_m,\ell'_k))}^-}}c_{m}\), we conclude \(c'_k=_{|_{{V_\mathrm{diff}}^-}}c_{m+r}\).
Due to semantics, \(c'_{k+1}=c'_k\).
Hence, \(c'_{k+1}=_{|_{{V_\mathrm{diff}}^-}}c_{m+r}\).
Now consider four cases.

\begin{enumerate}
	\item  Assume \(\ell_p\neq\ell_\mathrm{err}\), there exists \((\ell_p,op'_{k+1}, \ell_{m+r+1})\in G\), and \(\rd(op'_{k+1})\cap V_\mathrm{diff}=\emptyset\).
Since \((\ell'_k,c'_k)\stackrel{g'_{k+1}}{\rightarrow}(\ell'_{k+1}, c'_{k+1})\) implies \(c'_{k}\models op'_{k+1}\) and we know \(c'_k=_{|_{{V_\mathrm{diff}}^-}}c_{m+r}\) and \(\rd(op'_{k+1})\cap V_\mathrm{diff}=\emptyset\), also \(c_{m+r}\models op'_{k+1}\).
Due to semantics and \(\ell_p=\ell_{m+r}\), there exists \((\ell_{m+r}, c_{m+r})\xrightarrow{(\ell_{m+r},op'_{k+1}, \ell_{m+r+1})}(\ell_{m+r+1},c_{m+r+1})\) with \(c_{m+r+1}=c_{m+r}\) and, thus, \(c'_{k+1}=_{|_{{V_\mathrm{diff}}^-}}c_{m+r+1}\).
Also, \(\pi_n=(\ell_0,c_0) \stackrel{g_1}{\rightarrow}(\ell_1, c_1) \stackrel{g_2}{\rightarrow} \ldots \stackrel{g_m}{\rightarrow} (\ell_m, c_m)\stackrel{g_{m+1}}{\rightarrow}(\ell_{m+1}, c_{m+1}) \stackrel{g_{m+2}}{\rightarrow} \ldots  \stackrel{g_{m+r}}{\rightarrow}(\ell_{m+r}, c_{m+r})\xrightarrow{(\ell_{m+r},op'_{k+1}, \ell_{m+r+1})}(\ell_{m+r+1},c_{m+r+1})\in paths(P)\).
Algorithm~\ref{alg:diffcompprop} either  (a)~adds \(\ell'_{k+1}\) to \(\bad\) and\(((\ell_m, \ell'_k),g'_{k+1},\ell'_{k+1})\) to \(\edges\), 
(b) adds \(((\ell_m, \ell'_k),g'_{k+1},(\ell_{m+r+1}, \ell'_{k+1})\) to \(\edges\) and already added \(\ell'_{k+1}\) to \(\bad\), or (c) adds \(((\ell_m, \ell'_k), g'_{k+1},(\ell_{m+r+1}, \ell'_{k+1}))\) to \(\edges\), \((\ell_{m+r+1}, \ell'_{k+1})\) to \(\visited\), and \(V_\mathrm{diff}\) to \(\modified((\ell_{m+r+1},\ell'_{k+1}))\).
Due to construction of \(E\) and never deleting elements from \(\bad,\edges, \modified\), there either exists 
\(((\ell_m, \ell'_k),g'_{k+1},\ell'_{k+1})\in E\) with \(\ell'_{k+1}\in\bad\) or \(((\ell_m, \ell'_k), g'_{k+1},(\ell_{m+r+1}, \ell'_{k+1}))\in E\) with \(V_\mathrm{diff}\subseteq \modified((\ell_{m+r+1},\ell'_{k+1}))\).
In the latter case, consider \(\pi_n\) and set \(f'=f\cup\{(k+1, m+r+1)\}\).
Since \(V_\mathrm{diff}\subseteq \modified((\ell_{m+r+1},\ell'_{k+1}))\) and \(c'_{k+1}=_{|_{{V_\mathrm{diff}}^-}}c_{m+r+1}\), also \(c'_{k+1}=_{|_{\modified((\ell_{m+r+1},\ell'_{k+1}))^-}}c_{m+r+1}\).
From \(f\) being total and monotonously increasing with largest value \(m\), we infer \(f'\) is total and monotonously increasing.
By construction of \(f'\), we get \(\forall 0{\leq}i{\leq}k+1: n_i=(\ell_{f'(i)},\ell'_i)\wedge c'_i=_{|_{\modified(n_i)^-}}c_{f'(i)}\).
  \item Assume \(\ell_p\neq\ell_\mathrm{err}\), there exists \((\ell_p,op'_{m+r+1}, \ell_{m+r+1})\in G\) s.t.\ \(op_{m+r+1}\) is satisfiable, \(op'_{k+1}\implies op_{m+r+1}\), and \((\rd(op'_{k+1})\cup\rd(op_{m+r+1}))\cap V_\mathrm{diff}=\emptyset\).
Since \((\ell'_k,c'_k)\stackrel{g'_{k+1}}{\rightarrow}(\ell'_{k+1}, c'_{k+1})\) implies \(c'_{k}\models op'_{k+1}\) and we know \(c'_k=_{|_{{V_\mathrm{diff}}^-}}c_{m+r}\) and \(\rd(op'_{k+1})\cap V_\mathrm{diff}=\emptyset\), also \(c_{m+r}\models op'_{k+1}\).
Due to \(op'_{k+1}\implies op_{m+r+1}\), furthermore \(c_{m+r}\models op_{m+r+1}\).
Due to semantics and \(\ell_p=\ell_{m+r}\), there exists \((\ell_{m+r}, c_{m+r})\xrightarrow{(\ell_{m+r},op_{m+r+1}, \ell_{m+r+1})}(\ell_{m+r+1},c_{m+r+1})\) with \(c_{m+r+1}=c_{m+r}\) and, thus, \(c'_{k+1}=_{|_{{V_\mathrm{diff}}^-}}c_{m+r+1}\).
Also, \(\pi_n=(\ell_0,c_0) \stackrel{g_1}{\rightarrow}(\ell_1, c_1) \stackrel{g_2}{\rightarrow} \ldots \stackrel{g_m}{\rightarrow} (\ell_m, c_m)\stackrel{g_{m+1}}{\rightarrow}(\ell_{m+1}, c_{m+1}) \stackrel{g_{m+2}}{\rightarrow} \ldots  \stackrel{g_{m+r}}{\rightarrow}(\ell_{m+r}, c_{m+r})\xrightarrow{(\ell_{m+r},op_{m+r+1}, \ell_{m+r+1})}(\ell_{m+r+1},c_{m+r+1})\in paths(P)\).
Algorithm~\ref{alg:diffcompprop} either  (a)~adds \(\ell'_{k+1}\) to \(\bad\) and\(((\ell_m, \ell'_k),g'_{k+1},\ell'_{k+1})\) to \(\edges\), 
(b) adds \(((\ell_m, \ell'_k),g'_{k+1},(\ell_{m+r+1}, \ell'_{k+1})\) to \(\edges\) and already added \(\ell'_{k+1}\) to \(\bad\), or (c) adds \(((\ell_m, \ell'_k), g'_{k+1},(\ell_{m+r+1}, \ell'_{k+1}))\) to \(\edges\), \((\ell_{m+r+1}, \ell'_{k+1})\) to \(\visited\), and \(V_\mathrm{diff}\) to \(\modified((\ell_{m+r+1},\ell'_{k+1}))\).
Due to construction of \(E\) and never deleting elements from \(\bad,\edges, \modified\), there either exists 
\(((\ell_m, \ell'_k),g'_{k+1},\ell'_{k+1})\in E\) with \(\ell'_{k+1}\in\bad\) or \(((\ell_m, \ell'_k), g'_{k+1},(\ell_{m+r+1}, \ell'_{k+1}))\in E\) with \(V_\mathrm{diff}\subseteq \modified((\ell_{m+r+1},\ell'_{k+1}))\).
In the latter case, consider \(\pi_n\) and set \(f'=f\cup\{(k+1, m+r+1)\}\).
Since \(V_\mathrm{diff}\subseteq \modified((\ell_{m+r+1},\ell'_{k+1}))\) and \(c'_{k+1}=_{|_{{V_\mathrm{diff}}^-}}c_{m+r+1}\), also \(c'_{k+1}=_{|_{\modified((\ell_{m+r+1},\ell'_{k+1}))^-}}c_{m+r+1}\).
From \(f\) being total and monotonously increasing with largest value \(m\), we infer \(f'\) is total and monotonously increasing.
By construction of \(f'\), we get \(\forall 0{\leq}i{\leq}k+1: n_i=(\ell_{f'(i)},\ell'_i)\wedge c'_i=_{|_{\modified(n_i)^-}}c_{f'(i)}\).

  \item Assume \(\ell_p\neq\ell_\mathrm{err}\), there exists  \((\ell_p,op'_{k+1}, \ell_2)\in G\) s.t.\ \(\rd(op'_{k+1})\cap V_\mathrm{diff}\neq\emptyset\) or there exists \((\ell_p,op_{m+r+1}, \ell_2)\in G\) with \(op_{m+r+1}\) being satisfiable, \(op'_{k+1}\implies op_{m+r+1}\), and \((\rd(op'_{k+1})\cup\rd(op_{m+r+1}))\cap V_\mathrm{diff}\neq\emptyset\).
 Algorithm~\ref{alg:diffcompprop} adds \(((\ell_m, \ell'_k),g'_{k+1},\ell'_{k+1})\) to \(\edges\) and adds or already added \(\ell'_{k+1}\) to \(\bad\).
Due to construction of \(E\), there exists
\(((\ell_m, \ell'_k),g'_{k+1},\ell'_{k+1})\in E\) with \(\ell'_{k+1}\in\bad\).	
	
  \item Assume \(\ell_p=\ell_\mathrm{err}\) or there does not exist \((\ell_p,op'_{k+1}, \ell_2)\in G\) and there does not exist \((\ell_p,op_{m+r+1}, \ell_2)\in G\) with \(op_{m+r+1}\) being satisfiable and \(op'_{k+1}\implies op_{m+r+1}\).
 Since \(e_p=\ell_{m+r}\), Alg.~\ref{alg:diffcompprop} either  (a)~adds \(\ell'_{k+1}\) to \(\bad\) and\(((\ell_m, \ell'_k),g'_{k+1},\ell'_{k+1})\) to \(\edges\), (b) adds \(((\ell_m, \ell'_k), g'_{k+1},(\ell_{m+r}, \ell'_{k+1}))\) to \(\edges\) and already added \(\ell'_{k+1}\) to \(\bad\), or (c)~it adds \(((\ell_m, \ell'_k), g'_{k+1},(\ell_{m+r}, \ell'_{k+1}))\) to \(\edges\), \((\ell_{m+r}, \ell'_{k+1})\) to \(\visited\), and \(V_\mathrm{diff}\) to \(\modified((\ell_{m+r},\ell'_{k+1}))\).
Due to construction of \(E\) and never deleting elements from \(\bad,\edges, \modified\), there either exists 
\(((\ell_m, \ell'_k),g'_{k+1},\ell'_{k+1})\in E\) with \(\ell'_{k+1}\in\bad\) or there exists \(((\ell_m, \ell'_k), g'_{k+1},(\ell_{m+r}, \ell'_{k+1}))\in E\) with \(V_\mathrm{diff}\subseteq \modified((\ell_{m+r},\ell'_{k+1}))\).
In the latter case, set \(\pi_n=\pi_i\) and \(f'=f\cup\{(k+1, m+r)\}\).
Since \(V_\mathrm{diff}\subseteq \modified((\ell_{m+r},\ell'_{k+1}))\) and \(c'_{k+1}=_{|_{{V_\mathrm{diff}}^-}}c_{m+r}\), also \(c'_{k+1}=_{|_{^\modified((\ell_{m+r},\ell'_{k+1}))^-}}c_{m+r}\).
From \(f\) being total and monotonously increasing with largest value \(m\), we infer \(f'\) is total and monotonously increasing.
By construction of \(f'\), we get \(\forall 0{\leq}i{\leq}k+1: n_i=(\ell_{f'(i)},\ell'_i)\wedge c'_i=_{|_{\modified(n_i)^-}}c_{f'(i)}\).
\end{enumerate}
\end{description}
\qed
\end{proof}

Based on the previous lemmas, we now show that for any prefix of a path with regression bug that can be aligned with a path in the graph constructed by the difference detector \diffDetectOurs that does not visit bug indicator nodes, there exist an alignable path in the original program.

\begin{lemma}\label{lem:align}
Let \(P=(L,\ell_0,G, \ell_\mathrm{err})\) and \(P'=(L',\ell'_0,G', \ell'_\mathrm{err})\) be arbitrary deterministic programs, \diffDetectOurs\((P,P')=(N,E,n_0,\bad)\), and \(\modified\) in the state as at the end of \diffDetectOurs\((P,P')\). 
For all \(\pi' = (\ell'_0,c'_0) \stackrel{g'_1}{\rightarrow}(\ell'_1, c'_1) \stackrel{g'_2}{\rightarrow} \ldots \stackrel{g'_n}{\rightarrow} (\ell'_n, c'_n)\in paths^\mathrm{rb}(P,P')\) and \(0{\leq}k{\leq}n\) if \(n_0\stackrel{g'_1}{\rightarrow}n_1 \stackrel{g'_2}{\rightarrow} \ldots \stackrel{g'_k}{\rightarrow} n_k\in \textnormal{\diffDetectOurs}(P,P')\) and \(\forall 0\leq i\leq k: n_i\notin\bad\), then \(\exists (\ell_0,c_0) \stackrel{g_1}{\rightarrow}(\ell_1, c_1) \stackrel{g_2}{\rightarrow} \ldots \stackrel{g_m}{\rightarrow} (\ell_m, c_m)\in paths(P): c_0=c'_0\wedge\exists f:\{0,\dots,k\}\rightarrow\{0,\dots,m\}:~f\) total, monotonously increasing \(\wedge f(k)=m\wedge\forall 0{\leq}j{\leq}k: n_j=(\ell_{f(j)},\ell'_j)\wedge c'_j=_{|_{\modified(n_j)^-}}c_{f(j)}\).
\end{lemma}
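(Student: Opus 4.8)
The plan is to prove the statement by induction on the prefix length~\(k\), using the two preceding lemmas as the workhorses: Lemma~\ref{lem:deterministic} to pin down that the difference-graph path following \(g'_1,\dots,g'_k\) is uniquely determined by its labels and root, and Lemma~\ref{lem:extend} to perform the actual single-step extension of the alignment from length \(k-1\) to length \(k\).

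For the base case \(k=0\), the difference-graph path is just the root \(n_0=(\ell_0,\ell'_0)\). I would take the trivial original path \(\pi=(\ell_0,c_0)\in paths(P)\) with \(c_0\assign c'_0\) and the alignment \(f\) with \(f(0)=0\). Since \(c_0=c'_0\) agree on every variable, we in particular have \(c'_0=_{|_{\modified(n_0)^-}}c_0\) for any value of \(\modified(n_0)\), and \(n_0=(\ell_{f(0)},\ell'_0)\) holds by definition, so all required conditions are met. No properties of \(\modified\) at the root are actually needed here.

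For the step case \(k-1\rightarrow k\), note first that the hypothesis \(\forall 0\leq i\leq k: n_i\notin\bad\) implies the same for all \(i\leq k-1\), so the induction hypothesis applies to the prefix \(n_0\stackrel{g'_1}{\rightarrow}\dots\stackrel{g'_{k-1}}{\rightarrow}n_{k-1}\). This yields an original path \(\pi\) ending in \((\ell_m,c_m)\) and a total, monotonously increasing \(f:\{0,\dots,k-1\}\rightarrow\{0,\dots,m\}\) with \(f(k-1)=m\) satisfying the alignment invariant. I would then invoke Lemma~\ref{lem:extend} with its parameter instantiated to \(k-1\) (legal because \(k\leq n\) gives \(k-1<n\)), which offers two outcomes: either there is an edge \((n_{k-1},g'_k,n')\in E\) with \(n'\in\bad\), or the alignment extends to an \(f'\) defined on \(\{0,\dots,k\}\) with exactly the desired properties. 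The extension outcome is precisely the claim, so it remains to rule out the first outcome. If such an edge existed, then \(n_0\stackrel{g'_1}{\rightarrow}\dots\stackrel{g'_{k-1}}{\rightarrow}n_{k-1}\stackrel{g'_k}{\rightarrow}n'\) would be a difference-graph path sharing the labels \(g'_1,\dots,g'_k\) and the root \(n_0\) with the given path \(n_0\stackrel{g'_1}{\rightarrow}\dots\stackrel{g'_k}{\rightarrow}n_k\); Lemma~\ref{lem:deterministic} then forces \(n'=n_k\), whence \(n_k\in\bad\), contradicting the hypothesis. Hence only the extension outcome remains, completing the induction.

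The genuinely delicate point, and the one I expect to be the main obstacle, is matching the existential conclusion of Lemma~\ref{lem:extend} (\emph{some} successor of \(n_{k-1}\) under \(g'_k\) lies in \(\bad\)) against the pointwise hypothesis that the \emph{particular} node \(n_k\) taken by the given path satisfies \(n_k\notin\bad\). This is exactly where determinism is essential: without Lemma~\ref{lem:deterministic} one could not identify the \(\bad\)-successor produced by Lemma~\ref{lem:extend} with \(n_k\), and the contradiction would fail. All the heavier reasoning about data dependencies, resynchronization of assignments, and the \(\modified\)-invariant is encapsulated in Lemma~\ref{lem:extend}, so once that lemma and the determinism lemma are in hand, the present statement follows by the clean induction sketched above.
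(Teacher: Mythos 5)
Your proof is correct and takes essentially the same route as the paper's: induction on the prefix length, a trivial base case, and a step case that combines Lemma~\ref{lem:extend} with Lemma~\ref{lem:deterministic} and the hypothesis \(n_k\notin\bad\) to force the extension disjunct of Lemma~\ref{lem:extend}. The only (cosmetic) difference is that the paper first uses Lemma~\ref{lem:deterministic} to identify the existentially quantified successor \(n_s\) delivered by Lemma~\ref{lem:extend} with \(n_i\) and only then rules out \(n_s\in\bad\), whereas you apply determinism solely to refute the \(\bad\)-branch; since the extension disjunct of Lemma~\ref{lem:extend} likewise speaks about an existentially quantified successor in \(E\) rather than about the given path's node \(n_k\), strictly speaking the same one-line identification is needed in that branch as well, but it is exactly the argument you already spelled out.
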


\begin{proof}
Let \(P=(L,\ell_0,G, \ell_\mathrm{err})\) and \(P'=(L',\ell'_0,G', \ell'_\mathrm{err})\) be arbitrary deterministic programs, \diffDetectOurs\((P,P')=(N,E,n_0,\bad)\), and \(\modified\) in the state as at the end of \diffDetectOurs\((P,P')\).
Let \(\pi' = (\ell'_0,c'_0) \stackrel{g'_1}{\rightarrow}(\ell'_1, c'_1) \stackrel{g'_2}{\rightarrow} \ldots \stackrel{g'_n}{\rightarrow} (\ell'_n, c'_n)\in paths^\mathrm{rb}(P,P')\), \(k\in\mathbb{N}\) s.t.\ \(0{\leq}k{\leq}n\), and \(n_0\stackrel{g'_1}{\rightarrow}n_1 \stackrel{g'_2}{\rightarrow} \ldots \stackrel{g'_k}{\rightarrow} n_k\in \textnormal{\diffDetectOurs}(P,P')\) s.t.\ \(\forall 0\leq j\leq k: n_j\notin\bad\) be arbitrary.
Prove by induction, for all \(0\leq i\leq k\) there exists \((\ell_0,c_0) \stackrel{g_1}{\rightarrow}(\ell_1, c_1) \stackrel{g_2}{\rightarrow} \ldots \stackrel{g_{m_i}}{\rightarrow} (\ell_{m_i}, c_{m_i})\in paths(P)\) with \(c_0=c'_0\), and there exists total monotonously increasing function~\(f:\{0,\dots,i\}\rightarrow\{0,\dots,m_i\}\) with \(f(i)=m_i\) and \(\forall 0{\leq}j{\leq}i: n_j=(\ell_{f(j)},\ell'_j)\wedge c'_j=_{|_{{\modified(n_j)}^-}}c_{f(j)}\).

\begin{description}
  \item[Base case ($i=0$)] Define \(c'_0=c_0\) and \(f:=\{(0,0)\}\). By definition, we get \((\ell_0, c_0)\in paths(P)\). Obviously, \(f(0)=0\). Furthermore, \(f\) is total, monotonously increasing. Since \(n_0\notin\bad\), Alg.~\ref{alg:diffcompprop} sets \(n_0=(\ell_0,\ell'_0)\). Furthermore, \(c_0=c'_0\) implies \(c'_0=_{|_{\modified(n_0)^-}}c_{f(0)}\).
  \item[Step case ($i-1\rightarrow i, i\leq k$)]
		Since \(n_0\stackrel{g'_1}{\rightarrow}n_1 \stackrel{g'_2}{\rightarrow} \ldots \stackrel{g'_{k}}{\rightarrow} n_{k}\in \textnormal{\diffDetectOurs}(P,P')\), by definition also prefix \(n_0\stackrel{g'_1}{\rightarrow}n_1 \stackrel{g'_2}{\rightarrow} \ldots \stackrel{g'_{i-1}}{\rightarrow} n_{i-1}\in \textnormal{\diffDetectOurs}(P,P')\).
	By induction, there exists \((\ell_0,c_0) \stackrel{g_1}{\rightarrow}(\ell_1, c_1) \stackrel{g_2}{\rightarrow} \ldots \stackrel{g_{m_{i-1}}}{\rightarrow} (\ell_{m_{i-1}}, c_{m_{i-1}})\in paths(P)\) with \(c_0=c'_0\), and there exists total,  monotonously increasing function~\(f:\{0,\dots,i-1\}\rightarrow\{0,\dots,m_{i-1}\}\) with \(f(i-1)=m_{i-1}\) and \(\forall 0{\leq}j{\leq}i-1: n_j=(\ell_{f(j)},\ell'_j)\wedge c'_j=_{|_{\modified(n_j)^-}}c_{f(j)}\).
Due to Lemma~\ref{lem:extend} and \(i-1<k\leq n\), there exists \((n_{i-1}, g'_{i},n_s)\in E\) such that \(n_s\in\Delta\) or there exists \((\ell_0,c_0) \stackrel{g_1}{\rightarrow}(\ell_1, c_1) \stackrel{g_2}{\rightarrow} \ldots \stackrel{g_{m_{i-1}}}{\rightarrow} (\ell_{m_{i-1}}, c_{m_{i-1}})\ldots \stackrel{g_{{m_{i-1}}+t}}{\rightarrow} (\ell_{{m_{i-1}}+t}, c_{{m_{i-1}}+t})\in paths(P)\) and total, monotonously increasing function~\(f':\{0,\dots,i\}\rightarrow\{0,\dots,{m_{i-1}}+t\}\) with \(f'(i)={m_{i-1}}+t\) and \(\forall 0{\leq}j{<}i: n_j=(\ell_{f'(j)},\ell'_j)\wedge c'_j=_{|_{{\modified(n_j)}^-}}c_{f'(j)}\) 
and \(n_s=(\ell_{f'(i)},\ell'_i)\wedge c'_i=_{|_{{\modified(n_s)}^-}}c_{f'(i)}\).
By definition, \(n_0\stackrel{g'_1}{\rightarrow}n_1 \stackrel{g'_2}{\rightarrow} \ldots \stackrel{g'_{i-1}}{\rightarrow} n_{i-1}\stackrel{g'_{i}}{\rightarrow} n_s\in \textnormal{\diffDetectOurs}(P,P')\).	
Due to Lemma~\ref{lem:deterministic}, \(n_s=n_i\), and, thus, \(n_s\notin\bad\).
Set \(m_{i}=m_{i-1}+t\). 
The hypothesis follows.
\end{description}
\qed
\end{proof}

\subsection{Proof of Theorem~2}

\begin{theorem2}
Algorithm~\ref{alg:diffcompprop} returns a difference graph.
\end{theorem2}

\begin{proof}
Let \(P=(L,\ell_0,G, \ell_\mathrm{err})\) and \(P'=(L',\ell'_0,G', \ell'_\mathrm{err})\) be arbitrary deterministic programs and \diffDetectOurs\((P,P')=(N,E,n_0,\bad)\) the output of Alg.~\ref{alg:diffcompprop}.
By construction, \(n_0\in N\), \(\bad\subseteq N\), and \(E\subseteq N\times G'\times N\).
It remains to be shown that 
for all \(\pi' = (\ell'_0,c'_0) \stackrel{g'_1}{\rightarrow}(\ell'_1, c'_1) \stackrel{g'_2}{\rightarrow} \ldots \stackrel{g'_n}{\rightarrow} (\ell'_n, c'_n)\in paths^\mathrm{rb}(P,P')\) and  \(0{\leq}k{\leq}n\) if \(n_0\stackrel{g'_1}{\rightarrow}n_1 \stackrel{g'_2}{\rightarrow} \ldots \stackrel{g'_k}{\rightarrow} n_k\in \textnormal{\diffDetectOurs}(P,P')\), then there exists \(n_k\stackrel{\cdot}{\rightarrow} \ldots \stackrel{\cdot}{\rightarrow} n_{k+m}\) with \(n_{k+m}\in\Delta\). 

If Alg.~\ref{alg:diffcompprop} returns in line~19, the property is trivially fulfilled.
We continue to show that the property also yields if Alg.~\ref{alg:diffcompprop} does not return in line~19, i.e., it returns in line~21.
Let \( \pi' = (\ell'_0,c'_0) \stackrel{g'_1}{\rightarrow}(\ell'_1, c'_1) \stackrel{g'_2}{\rightarrow} \ldots \stackrel{g'_n}{\rightarrow} (\ell'_n, c'_n)\in paths^\mathrm{rb}(P,P')\), \(k\in\mathbb{N}\) s.t.\ \(0{\leq}k{\leq}n\), and \(n_0\stackrel{g'_1}{\rightarrow}n_1 \stackrel{g'_2}{\rightarrow} \ldots \stackrel{g'_k}{\rightarrow} n_k\in \textnormal{\diffDetectOurs}(P,P')\) be arbitrary.
Furthermore, let \(\modified\) be in the same state as at the end of \diffDetectOurs\((P,P')\). 
If \(n_k\in\bad\), the existence claim follows from path \(n_0\stackrel{g'_1}{\rightarrow}n_1 \stackrel{g'_2}{\rightarrow} \ldots \stackrel{g'_k}{\rightarrow} n_k\).
If \(n_k\notin\bad\), by construction \(\visited\subseteq L\times L'\), \(\bad\subseteq L'\) and by definition  of \(E\) in line~20 no edge leaves a node from~\(\bad\).
Hence, also all \(n_i\) with \(0{\leq}i{\leq}k\) are no elements of \(\bad\).
Due to Lemma~\ref{lem:align}, there exists \((\ell_0,c_0) \stackrel{g_1}{\rightarrow}(\ell_1, c_1) \stackrel{g_2}{\rightarrow} \ldots \stackrel{g_{m_k}}{\rightarrow} (\ell_{m_k}, c_{m_k})\in paths(P)\) s.t.\  \(c_0=c'_0\), and there exists total, monotonously increasing function \(f:\{0,\dots,k\}\rightarrow\{0,\dots,{m_k}\}\) with \(f(k)=m_k\) and \(\forall 0{\leq}i{\leq}k: n_i=(\ell_{f(i)},\ell'_i)\wedge c'_i=_{|_{\modified(n_i)^-}}c_{f(i)}\).
Since \(\pi'\in paths^\mathrm{rb}(P,P')\), we conclude \(\forall 0{\leq}j{\leq}{m_k}: \ell_j\neq \ell_\mathrm{err}\).
Since Alg.~\ref{alg:diffcompprop} does not return in line~19 and \( \ell_0\neq \ell_\mathrm{err}\), also  \(\ell'_0\neq \ell'_\mathrm{err}\).
By construction, \(\ell'_0\neq \ell'_\mathrm{err}\), \(\forall 0{\leq}i{\leq}k: n_i=(\ell_{f(i)},\ell'_i)\), and  \(\forall 0{\leq}j{\leq}{m_k}: \ell_j\neq \ell_\mathrm{err}\), we know that all \(n_i\) with \(0{\leq}i{\leq}k\) are from \((L\setminus\{\ell_\mathrm{err}\})\times(L'\setminus\{\ell'_\mathrm{err}\})\).
Now, let \(i_e\in\mathbb{N}\) be the smallest index such that \(0\leq i_e\leq n\) and \(\ell'_{i_e}=\ell'_\mathrm{err}\).
Since \(\pi'\in paths^\mathrm{rb}(P,P')\) such an \(i_e\) exists.
Furthermore, due to \(n_i=(\cdot,\ell'_i)\in (L\setminus\{\ell_\mathrm{err}\})\times(L'\setminus\{\ell'_\mathrm{err}\})\) for all \(0\leq i\leq k\), also \(i_e>k\).
Show by induction that for all \(t\in\mathbb{N}\) either  \(k+t> i_e\) or \(k+t\leq i_e\) and there exist \(0\leq t_s\leq t\) and \(n_0\stackrel{g'_1}{\rightarrow}n_1 \stackrel{g'_2}{\rightarrow} \ldots \stackrel{g'_k}{\rightarrow} n_k\stackrel{g'_{k+1}}{\rightarrow} \ldots \stackrel{g'_{k+{t_s}}}{\rightarrow} n_{k+{t_s}}\in \textnormal{\diffDetectOurs}(P,P')\) such that  \(n_{k+{t_s}}\in\Delta\) or \(t_s=t\) and \(\forall 0\leq j\leq t_s: n_{k+j}\notin\bad\).
\begin{description}
  \item[Base case ($t=0$)] Follows from \(i_e>k\), \(n_0\stackrel{g'_1}{\rightarrow}n_1 \stackrel{g'_2}{\rightarrow} \ldots \stackrel{g'_k}{\rightarrow} n_k\in \textnormal{\diffDetectOurs}(P,P')\), and \(n_k\notin\bad\). 
  \item[Step case ($t-1\rightarrow t$)]
	If \(k+t> i_e\) or \(k+t\leq i_e\) and there exist \(0\leq t_s\leq t-1\) and \(n_0\stackrel{g'_1}{\rightarrow}n_1 \stackrel{g'_2}{\rightarrow} \ldots \stackrel{g'_k}{\rightarrow} n_k\stackrel{g'_{k+1}}{\rightarrow} \ldots \stackrel{g'_{k+{t_s}}}{\rightarrow} n_{k+{t_s}}\in \textnormal{\diffDetectOurs}(P,P')\) such that \(n_{k+{t_s}}\in\Delta\), nothing needs to be shown.
	Assume \(k+t\leq i_e\) and there do not exist \(0\leq t_s\leq t-1\) and \(n_0\stackrel{g'_1}{\rightarrow}n_1 \stackrel{g'_2}{\rightarrow} \ldots \stackrel{g'_k}{\rightarrow} n_k\stackrel{g'_{k+1}}{\rightarrow} \ldots \stackrel{g'_{k+{t_s}}}{\rightarrow} n_{k+{t_s}}\in \textnormal{\diffDetectOurs}(P,P')\) such that \(n_{k+{t_s}}\in\Delta\).
	By induction hypothesis, there exists \(n_0\stackrel{g'_1}{\rightarrow}n_1 \stackrel{g'_2}{\rightarrow} \ldots \stackrel{g'_k}{\rightarrow} n_k\stackrel{g'_{k+1}}{\rightarrow} \ldots \stackrel{g'_{k+{t-1}}}{\rightarrow} n_{k+{t-1}}\in \textnormal{\diffDetectOurs}(P,P')\) such that \(\forall 0\leq j\leq t-1: n_{k+j}\notin\bad\).
	Since \(k+t\leq i_e\leq n\), also \(k+t-1<n\).
	Furthermore, from before we know that \(\forall 0\leq j\leq k: n_{j}\notin\bad\).
	Hence, \(\forall 0\leq j\leq k+t-1: n_j\notin\bad\).
	Combining Lemma~\ref{lem:align} and Lemma~\ref{lem:extend}, we conclude there exists \((n_{k+t-1}, g'_{k+t},n_{k+t})\in E\) with \(n_{k+t}\in\Delta\vee n_{k+1}=(\cdot,\ell'_{k+t})\).
	By construction of \(\Delta\), if \(n_{k+1}=(\cdot,\ell'_{k+t})\), then \(n_{k+1}\notin\Delta\).
	By definition, \(n_0\stackrel{g'_1}{\rightarrow}n_1 \stackrel{g'_2}{\rightarrow} \ldots \stackrel{g'_k}{\rightarrow} n_k\stackrel{g'_{k+1}}{\rightarrow} \ldots \stackrel{g'_{k+{t-1}}}{\rightarrow} n_{k+{t-1}}\stackrel{g'_{k+{t}}}{\rightarrow} n_{k+{t}}\in \textnormal{\diffDetectOurs}(P,P'): n_{k+{t}}\in\Delta\vee\forall 0\leq j\leq t: n_{k+j}\notin\Delta\).
	The induction hypothesis follows.
\end{description}
Consider \(t=i_e-k\).
Since \(i_e>k\), we know \(t>0\).
Due to induction, there exist \(0\leq t_s\leq i_e-k\) and \(n_0\stackrel{g'_1}{\rightarrow}n_1 \stackrel{g'_2}{\rightarrow} \ldots \stackrel{g'_k}{\rightarrow} n_k\stackrel{g'_{k+1}}{\rightarrow} \ldots \stackrel{g'_{k+{t_s}}}{\rightarrow} n_{k+{t_s}}\in \textnormal{\diffDetectOurs}(P,P')\) such that \(n_{k+t_s}\in\Delta\) or \(t_s=i_e-k\) and \(\forall 0\leq j\leq t_s: n_{k+j}\notin\bad\).
Show by contradiction that there do not exist \(0\leq t_s\leq i_e-k\) and \(n_0\stackrel{g'_1}{\rightarrow}n_1 \stackrel{g'_2}{\rightarrow} \ldots \stackrel{g'_k}{\rightarrow} n_k\stackrel{g'_{k+1}}{\rightarrow} \ldots \stackrel{g'_{k+{t_s}}}{\rightarrow} n_{k+{t_s}}\in \textnormal{\diffDetectOurs}(P,P')\) such that \(t_s=i_e-k\) and \(\forall 0\leq j\leq t_s: n_{k+j}\notin\bad\).
Assume the contrary.
From before we know that \(\forall 0\leq j\leq k: n_{j}\notin\bad\).
Hence,  there exists \(n_0\stackrel{g'_1}{\rightarrow}n_1 \stackrel{g'_2}{\rightarrow} \ldots \stackrel{g'_k}{\rightarrow} n_k\stackrel{g'_{k+1}}{\rightarrow} \ldots \stackrel{g'_{i_e}}{\rightarrow} n_{i_e}\in \textnormal{\diffDetectOurs}(P,P')\) s.t.\ \(\forall 0\leq j\leq i_e: n_{j}\notin\bad\).
Due to Lemma~\ref{lem:align}, there exists \(\pi=(\ell_0,c_0) \stackrel{g_1}{\rightarrow}(\ell_1, c_1) \stackrel{g_2}{\rightarrow} \ldots \stackrel{g_m}{\rightarrow} (\ell_m, c_m)\in paths(P)\) with \(c_0=c'_0\) and \(n_{i_e}=(\ell_m, \ell'_{i_e})\).
Since \(\pi'\in paths^\mathrm{rb}(P,P')\), we know \(\pi\notin paths^\mathrm{err}\).
Hence, \(\ell_m\neq\ell_\mathrm{err}\) and since \(\ell'_{i_e}=\ell'_\mathrm{err}\), we conclude \((\ell_m, \ell'_{i_e})\in (L\setminus\{\ell_\mathrm{err}\})\times\{\ell'_\mathrm{err}\}\).
Since \(\ell'_0\neq \ell'_\mathrm{err}\) and by construction \(n_{i_e}=(\ell_m,\ell'_\mathrm{err})\notin\Delta\) implies \((\ell_m,\ell'_\mathrm{err})\in\visited\), we conclude \((\ell_m,\ell'_\mathrm{err})\) must be added in line~32.
To be added in line~32, \((\ell_m,\ell'_\mathrm{err})\notin (L\setminus\{\ell_\mathrm{err}\})\times\{\ell'_\mathrm{err}\}\), which contradicts \((\ell_m, \ell'_{i_e})\in (L\setminus\{\ell_\mathrm{err}\})\times\{\ell'_\mathrm{err}\}\).
Hence,  there do not exist \(0\leq t_s\leq i_e-k\) and \(n_0\stackrel{g'_1}{\rightarrow}n_1 \stackrel{g'_2}{\rightarrow} \ldots \stackrel{g'_k}{\rightarrow} n_k\stackrel{g'_{k+1}}{\rightarrow} \ldots \stackrel{g'_{k+{t_s}}}{\rightarrow} n_{k+{t_s}}\in \textnormal{\diffDetectOurs}(P,P')\) with \(t_s=i_e-k\) and \(\forall 0\leq j\leq t_s: n_{k+j}\notin\bad\).
Thus, there exist \(0\leq t_s\leq i_e-k\) and \(n_0\stackrel{g'_1}{\rightarrow}n_1 \stackrel{g'_2}{\rightarrow} \ldots \stackrel{g'_k}{\rightarrow} n_k\stackrel{g'_{k+1}}{\rightarrow} \ldots \stackrel{g'_{k+{t_s}}}{\rightarrow} n_{k+{t_s}}\in \textnormal{\diffDetectOurs}(P,P')\) such that \(n_{k+t_s}\in\Delta\).
\qed
\end{proof}

\end{document}